\newtheorem{prop}{Proposition}
\newtheorem{assumption}{Assumption}
\newtheorem{remark}{Remark}
\begin{document}
\renewcommand{\figurename}{Fig.}
\title{A Statistical Characterization of Localization Performance in Millimeter-Wave Cellular Networks}

\author{Jiajun He,~\IEEEmembership{Student Member,~IEEE,}
        Young Jin Chun,~\IEEEmembership{Member,~IEEE}
\thanks{This work was supported by in part by the City University of Hong Kong (CityU), Startup Grant 7200618, and in part by the CityU, Strategic Research Grant 21219520}        
\thanks{J. He and Y. J. Chun are with the Department of Electrical Engineering, City University of Hong Kong, Hong Kong, China (e-mail: jiajunhe5-c@cityu.edu.hk; yjchun@cityu.edu.hk)}}
        

\maketitle

\begin{abstract}
Millimeter-wave (mmWave) communication is a promising solution for achieving high data rate and low latency in 5G wireless cellular networks. Since directional beamforming and antenna arrays are exploited in the mmWave networks, accurate angle-of-arrival (AOA) information can be obtained and utilized for localization purposes. The performance of a localization system is typically assessed by the Cram$\bf \acute{\rm e}$r-Rao lower bound (CRLB) evaluated based on fixed node locations. However, this strategy only produces a fixed value for the CRLB specific to the scenario of interest. To allow randomly distributed nodes,  stochastic geometry has been proposed to study the CRLB for time-of-arrival-based localization. To the best of our knowledge, this methodology has not yet been investigated for AOA-based localization. In this work, we are motivated to consider the mmWave cellular network and derive the CRLB for AOA-based localization and its distribution using stochastic geometry. We analyze how the CRLB is affected by the node locations' spatial distribution, including the target and participating base stations. To apply the CRLB on a network setting with random node locations, we propose an accurate approximation of the CRLB using the $\lceil L/4 \rceil$-th value of ordered distances where $L$ is the number of participating base stations. Furthermore, we derive the localizability of mmWave network, which is the probability that a target is localizable, and examine how the network parameters influence the localization performance. These findings provide us deep insight into optimum network design that meets specified localization requirements.
\end{abstract}

\begin{IEEEkeywords}
Millimeter-wave, angle-of-arrival, localizability, Cram$\bf \acute{\rm e}$r-Rao lower bound.
\end{IEEEkeywords}

\IEEEpeerreviewmaketitle

\section{Introduction}
\IEEEPARstart{D}{ue} to the emergence of internet-of-things, positioning techniques have received considerable attention, which can be utilized to enhance user experience of location-based services, including navigation, mapping, and intelligent transportation systems \cite{ref1}. Fifth-generation (5G) wireless network access interface together with its large bandwidth, high carrier frequency, and massive antenna array offers excellent opportunities for accurate localization, and millimeter-wave (mmWave) is a promising technology for the 5G wireless communication systems to meet such requirements. 
Wireless networks enable us to obtain accurate location-bearing information from estimating the channel parameters, such as  time-of-arrival (TOA), time-difference-of-arrival (TDOA), received signal strength (RSS), and angle-of-arrival (AOA). In mmWave networks, we can exploit the large antenna array and highly directional transmission to acquire the AOAs with high precision\cite{ref2}. Large-scale directional antenna arrays are leveraged due to the small wavelength of mmWave signals, which can generate highly directional beams and provide large beamforming gain\cite{ref3}. In this paper, we analyze the localization performance of the mmWave wireless network using the AOA measurements.

A target is \textit{localizable} if its position can be determined without ambiguity with a sufficient number of participating base stations (BSs). The AOA-based positioning requires at least $2$ BSs to determine the location of the target in a two-dimensional (2-D) plane \cite{ref4}. Since the number of participating BSs determines the accuracy of the localization, we introduce the notion of $L$-localizability, which indicates the probability of at least $L$ BSs participating in the localization procedure. 

Furthermore, Cram$\bf \acute{\rm e}$r-Rao lower bound (CRLB) is a standard tool to analyze the performance of localization algorithm, which provides a lower bound for the position error of any unbiased estimator \cite{ref4}. Conventionally, CRLB assumes a fixed scenario, where the nodes are placed at a particular geometry, and this assumption limits the applicability of CRLB as it cannot properly reflect the impact of the random geometry. To evaluate the localization error of a random network, we use stochastic geometry \cite{ref5, ref5-b} and consider the ensemble average of the node spatial locations. Then the CRLB is no longer a fixed value, but rather a random variable (RV) conditioned on the number of participating BSs, where the randomness of CRLB is induced by the randomness of the nodes. Based on the $L$-localizability and random CRLB, we provide a deep insight for the network operator on how to deploy the BSs to achieve a given localization requirement. The main contributions of this paper are summarized as follows. 

\subsubsection{$L$-Localizability}
We derive the tractable expression of $L$-localizability to study the number of BSs who can participate in a localization procedure. In \cite{ref6}, the authors studied on how the network parameters affect the localization performance of the Long Term Evolution (LTE) cellular network. In this work, we derive the $L$-localizability for the mmWave networks, where the impacts of the directional antenna and Nakagami fading on mmWave-based localization systems are assessed. Furthermore, we introduce asymptotic bounds and approximations for the distribution of the $L$-localizability and CRLB to provide analytical tools to track the performance of localization systems. 

\subsubsection{Random AOA-based CRLB}
In the mmWave networks, accurate AOA measurements can be obtained by using antenna arrays to locate the target of interest with high precision. In this paper, we derive random CRLB for AOA-based positioning. Previous works \cite{ref7, ref8, ref9} applied stochastic geometry to TOA-based localization, and to the best of our knowledge, there is no prior work that investigates random geometry on AOA-based localization systems. We derive the distribution of AOA-positioning based CRLB for the mmWave networks by using stochastic geometry and order statistics. The obtained distribution shows how the network parameters affect localization performance in the mmWave wireless networks.

The rest of this paper is organized as follows. Relevant works are reviewed in Section II, the system model is presented in Section III, and we analyze the localization performance in Section IV. Numerical results are provided in Section V and we conclude the paper in Section VI.

\section{Related Work}
The major localization techniques in the LTE mobile network are TDOA \cite{ref10, ref10-b}, uplink TDOA \cite{ref11}, measurement report (MR) \cite{ref12} and enhanced cell ID (E-CID) \cite{ref13}. Compared with the LTE mobile network, mmWave is regarded as a promising candidate to meet demands for achieving accurate localization in the 5G mobile network. Conventionally, the localization approaches can be divided into two categories: direct and indirect localization. In the mmWave networks, we focus on the latter due to the high computational complexity of the former. The target of interest can be located in mmWave networks using the indirect approach by estimating the channel parameters, including TOA, AOA, and RSS \cite{ref14}. Based on the processing methods of mmWave signals, localization approaches can be categorized into proximity, fingerprinting, and geometry-based \cite{ref15}. In this paper, we mainly focus on the geometry-based positioning approach because large-scale antenna arrays can provide high angular resolution \cite{ref16}. 

Localization performance is generally evaluated using CRLB for a fixed geometry \cite{ref4}. For considering all possible localization scenarios, we aim to derive the network-wide distribution of localization performance, and there are two important metrics, namely, the probability that a given number of BSs can participate in a localization procedure, and the distribution of the CRLB conditioned on the number of participating BSs. The first metric, which includes finding the participation probability of a given number of BSs, was studied in \cite{ref6}. The authors modeled a cellular network with a homogeneous Poisson point process (PPP) \cite{ref5} and applied a “dominant interferer analysis” to derive an expression for the probability of $L$-localizability. However, this method is only suitable for LTE mobile networks. Compared with \cite{ref6}, we derive an accurate expression of $L$-localizability using Alzer’s inequality \cite{ref17} for characterizing the localization performance of mmWave wireless networks.

Regarding the second metric, there have been several attempts in the literature to achieve this conditional distribution of CRLB. In \cite{ref18}, approximations of this conditional distribution were presented for RSS and TOA localization systems. However, these distributions are sensitive to the number of participating BSs, and it is only accurate for numerous participating BSs. In real-world scenarios, we prefer to measure the conditional distribution using smaller number of participating BSs because this is more common in cellular networks. Additionally, \cite{ref7} presented an analysis of how the CRLB is affected by the order statistics of internodal angles. This analysis reveals a connection between the second largest internodal angle and the CRLB, leading to an accurate approximation of the CRLB. However, only TOA-based localization is considered in a general fading channel which takes the large-scale fading into consideration. Motivated by these works, we explore the localization performance using the AOA measurements and apply it in the mmWave-based cellular network. Different from \cite{ref7}, we analyze how the CRLB is affected by the ordered distances between BSs and target, and an accurate approximation of the CRLB is provided using the $\lceil L/4 \rceil$-th distance between these ordered distances, where $L$ is the number of participating BSs in a localization procedure. 

\section{System Model}

In this section, we describe the system model where the key notations used in this paper are summarized in Table $1$.

\subsection{Network Model}
We consider downlink transmission in a mmWave cellular network where the locations of BSs are modeled using a homogeneous PPP\cite{ref5}. As illustrated in Fig. 1, we assume that the target is located at the origin \textit{O} and the BSs are randomly distributed over the $\mathbb{R}^{2}$ plane. The red triangle represents the nearest BS to the target that is located inside the disk, whereas the green triangle indicates the furthest BS from the target residing in the disk. Furthermore, blue and yellow triangles represent the BSs that are located inside and outside the disk, respectively. Let us denote the locations of the BS as $\bm{\psi}_{l} = [x_{l}, y_{l}] \in \mathbb{R}^{2}$ and the distance between the $l$-th BS and target as $r_{l} = ||\bm{\psi}_{l}||$. Based on the system model, the probability density function (PDF) and cumulative distribution function (CDF) of the $L$-th nearest BS are given by \cite{ref19}
\begin{equation}
\begin{split}
    f_{r_{L}}(r) &= \frac{2(\lambda\pi r^{2})^{L}}{r(L-1)!}e^{-\lambda\pi r^{2}},\\
    F_{r_{L}}(r) &= 1 - \sum_{n=0}^{L-1}\frac{1}{n!}e^{-2\pi\lambda r^{2}}(2\pi\lambda r^{2})^{n},
\end{split}
\label{eq-1}
\end{equation}
where $\lambda$ represents the BS density. Conditioned on the distance of the $L$-th BS from \textit{O}, the remaining BSs closer to the origin than the $L$-th BS form a uniform binomial point process (BPP) on $\bm{b}(\textit{O},R_{L})$ \cite{ref20}, where the PDF and CDF of $r_{l}$ are given by 
\begin{equation}
    f_{r_{l}}(r) = \frac{2r}{r_{L}^{2}-r_{1}^{2}}, \quad F_{r_{l}}(r) = \frac{r^{2}}{r_{L}^{2}-r_{1}^{2}},
\label{eq-2}
\end{equation}
with $r_{1} \leq r_{l} \leq r_{L}$. In Section \uppercase\expandafter{\romannumeral4}, we applied order statistic to obtain the distribution of the ordered distances.
\begin{table}[t]
\caption{Summary of Notation}
\label{table}
\setlength{\tabcolsep}{3pt}
\begin{tabular}{|p{45pt}|p{175pt}|}
\hline
Notation & Meaning \\
\hline
${}^T$ & transpose \\
${}^H$ & conjugate transpose \\
$||\cdot||$ &  Euclidean norm \\
$\bm{\psi}_{l}$ & location of $l$-th BS \\
$\bm{\psi}_{t}$ & location of target \\
$r_{l}$ & distance between $l$-th BS and target \\
$r_{1}$ & distance between closest BS and target \\
$r_{L}$ & distance between furthest BS and target \\
$\lambda$ & BS density in the disk \\
$P_{t}$ & BS and target transmit power\\
$N_{t}$ & number of antenna elements \\
$N$ & number of clusters \\
$\rho_{\bm{\psi}_{n}}$ & small-scale fading gain \\
$d$ & antenna spacing \\
$\lambda_{w}$ & antenna wavelength \\
$\theta_{\bm{\psi}}$ & AOA of BS at location $\bm{\psi}$ \\
$n_{{\rm{AOA}},l}$ & WGN with zero mean and variance of $\sigma_{{\rm{AOA}},l}^{2}$ \\
$G_{1}$ & main-lobe gain \\
$G_{2}$ & side-lobe gain \\
$p_{a}$ & probability of main-lobe gain is received \\
$p_{b}$ & probability of side-lobe gain is received \\
$P_{T}$ & total transmit power \\
$P_{m}$ & power spectrum density of main-lobe \\
$P_{s}$ & power spectrum density of side-lobe \\
$\sigma_{n}^{2}$ & normalized noise power\\
$a_{i}$ & network load indicator \\
$q$ & probability of BS is activated \\
$\alpha$ & path-loss exponent\\
$\Omega$ & total number of activated BSs \\
$\tau$ & signal-to-interference-plus-noise ratio threshold \\
$\gamma$ & maximum number of selectable BSs \\
$\sigma_{\rm{AOA}}$ & standard deviation of AOA measurement \\
$G_{c}$ & average channel gain of mmWave network\\
$N_{0}$ & spectral density of WGN \\
$W_{\rm{TOT}}$ & total mmWave system bandwidth \\
\hline
\end{tabular}
\label{tab1}
\end{table}

\subsection{Channel Model}
We assume that each BS is equipped with a directional antenna array composed of $N_{t}$ elements and all BSs operate at a constant power $P_{t}$. In the mmWave channel, the non-line-of-sight (NLOS) interference is negligible since the channel gains of NLOS paths are typically 20 dB weaker than those from the line-of-sight (LOS) \cite{ref21}. The effect of path-loss can be reduced due to the utilization of the directional antenna arrays, and it is also applied to provide highly directional beams. The received signal from the $l$-th BS to the origin is given by
\begin{equation}
\begin{split}
y(t) &= \sqrt{P_{t}\beta}\bm{h}_{\psi_{l}}\bm{w}_{\psi_{l}}r_{l}^{-\frac{\alpha}{2}}s_{\psi_{l}}(t) + n(t)\\
&+ \sum_{\bm{\psi} \in \bm{\psi}^{\prime}}\sqrt{P_{t}\beta}\bm{h}_{\bm{\psi}}\bm{w}_{\bm{\psi}}||\bm{\psi}||^{-\frac{\alpha}{2}}s_{\bm{\psi}}(t), \quad t \in\left[0, T \right],
\end{split}
\label{eq-3}
\end{equation}
where $s_{\bm{\psi}}(t)$ is the transmit signal, $\bm{h}_{\bm{\psi}}$ is the channel vector, $\alpha$ and $\beta$ respectively represent the path-loss exponent and path-loss intercept, $\bm{w}_{\bm{\psi}}$ denotes the beamforming vector of the node at location $\bm{\psi}$, and $n(t)$ represents the additive white Gaussian noise (AWGN) with variance $\sigma^{2}$. Note that the locations of the interfering transmitters are denoted as $\bm{\psi}^{\prime}$.

Due to high free-space path-loss, the mmWave propagation environment is well characterized by a clustered channel model, known as the Saleh-Valenzuela model\cite{ref22}:
\begin{equation}
\bm{h}_{\bm{\psi}} = \sqrt{N_{t}}\sum_{n=1}^{N}\rho_{\bm{\psi},{n}}\bm{a}_{t}^{H}(\theta_{\bm{\psi},{n}}),
\label{eq-4}
\end{equation}
where $N$ is the number of clusters and $\rho_{\bm{\psi},{n}}$ represents the complex small-scale fading coefficient of the $n$-th cluster. We assume that the fading channel power gain follows a gamma distribution, \textit{i.e.}, $|\rho_{\bm{\psi}}|^{2} \sim \Gamma(M,\frac{1}{M})$, with Nakagami parameter $M$. In this paper, we focus on LOS paths, \textit{i.e.}, $N = 1$, and adopt a uniformly random single path (UR-SP) channel model that is commonly used in the mmWave network analysis \cite{ref23}. The $\bm{a}_{t}(\theta_{\bm{\psi}})$ represents the transmit array response vector corresponding to the AOA $\theta_{\bm{\psi}}$. We consider a uniform linear array (ULA) with $N_{t}$ antenna elements, where the transimt array response vectors are given by
\begin{equation}
\bm{a}_{t}(\theta_{\bm{\bm{\psi}}}) = \frac{1}{\sqrt{N_{t}}}\left[1, \ldots, e^{j2\pi k\theta_{\bm{\psi}_{1}}}, \ldots ,e^{j2\pi (N_{t}-1)\theta_{\bm{\psi}}}\right]^{T},
\label{eq-5}
\end{equation}
where $d$ is the antenna spacing, $\lambda_{w}$ represents the wavelength, $\phi_{\bm{\psi}}$ denotes the AOA, $k \in \left[0, N_t \right]$ is the antenna index, and $\theta_{\bm{\psi}} = \frac{d}{\lambda_{w}}\sin\phi_{\bm{\psi}}$ is uniformly distributed over $\left[-\frac{d}{\lambda_{w}},\frac{d}{\lambda_{w}}\right]$. 

\begin{figure}[t]
\centerline{\includegraphics[width=0.85\columnwidth]{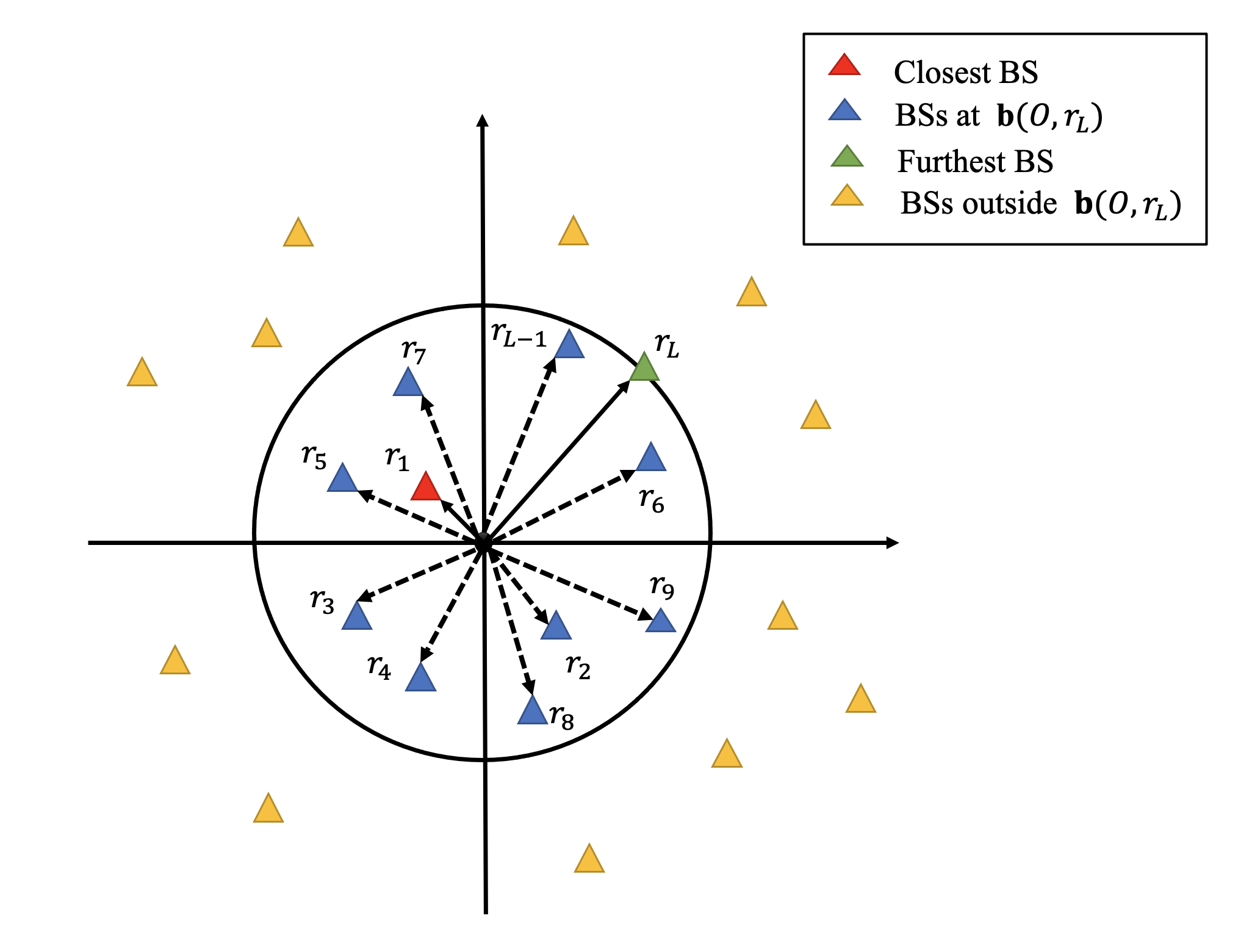}}
\caption{System model of the mmWave wireless networks}
\label{fig1}
\end{figure}

\begin{figure}[t]
\centerline{\includegraphics[width=0.85\columnwidth]{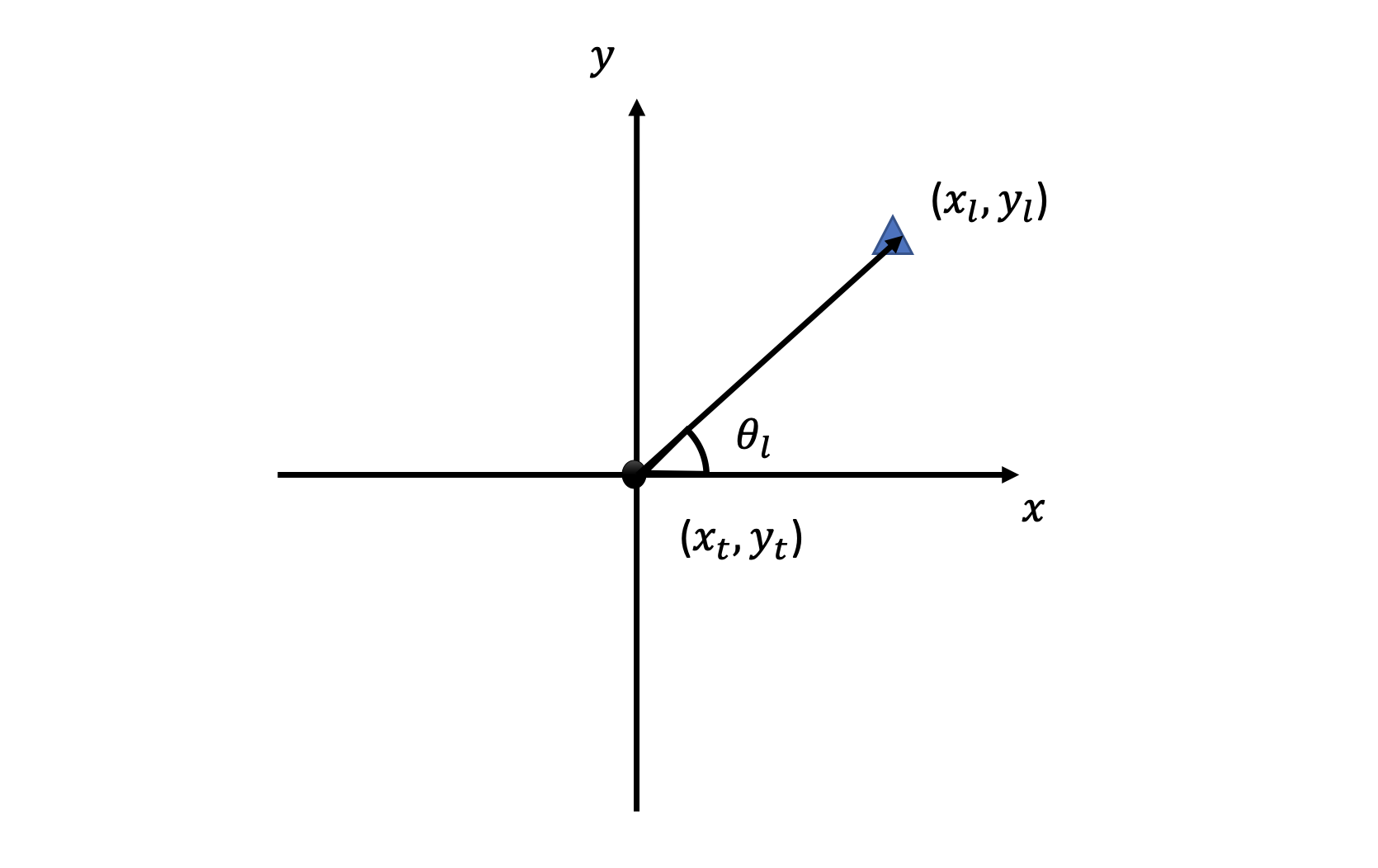}}
\caption{AOA-based positioning}
\label{fig2}
\end{figure}

Once the AOA measurements are obtained, we compute the location of the target, where we assume LOS propagation. As shown in Fig. 2, we denote the AOA between the target and $l$-th BS as $\theta_{l}$ and the location of target as $\bm{\psi}_{t} = [x_{t},y_{t}]$, 
\begin{equation}
\tan(\theta_{l})=\frac{y_{l}-y_{t}}{x_{l}-x_{t}},\quad l=\{1, \ldots, L\}.
\label{eq-6}
\end{equation}
The AOA measurement at the $l$-th BS is modeled as follows
\begin{equation}
r_{{\rm{AOA}},l} = \theta_{l} + n_{{\rm{AOA}},l} = \tan^{-1}\left(\frac{y_{l}-y_{t}}{x_{l}-x_{t}}\right) + n_{{\rm{AOA}},l},
\label{eq::eq7}
\end{equation}
where $n_{{\rm{AOA}},l}$ is the AWGN with variance $\sigma_{{\rm{AOA}},l}^{2}$. The AOA measurements in (\ref{eq::eq7}) can be represented by a vector form
\begin{equation}
{\bm{r}}_{\rm{AOA}} = \bm{f}_{AOA}(\bm{\psi}) + \bm{n}_{{\rm{AOA}},l},
\label{eq}
\end{equation}
where ${\bm{r}}_{\rm{AOA}}$, ${\bm{n}}_{\rm{AOA}}$, and $\bm{f}_{AOA}(\bm{\psi})$ are respectively defined by
\begin{equation}
    \begin{split}
        {\bm{r}}_{\rm{AOA}} &= \left[r_{{\rm{AOA}},1}, r_{{\rm{AOA}},2}, \ldots, r_{{\rm{AOA}},L}\right]^T,\\
        {\bm{n}}_{\rm{AOA}} &= \left[n_{{\rm{AOA}},1}, n_{{\rm{AOA}},2}, \ldots, n_{{\rm{AOA}},L}\right]^T,\\
        \bm{f}_{AOA}(\bm{\psi}) &= \left[ \tan^{-1}\left(\frac{y_{1}-y_{t}}{x_{1}-x_{t}}\right), \ldots, 
                                  \tan^{-1}\left(\frac{y_{L}-y_{t}}{x_{L}-x_{t}}\right) \right]^T.    
  \end{split}
    \label{eq::eq8}
\end{equation}


\subsection{Analog Beamforming and Antenna Radiation Pattern}
Assuming that the AOA of the channel between the BS at location $\bm{\psi}_{l}$ and its serving user at location $\bm{\psi}_{t}$ is $\theta_{\bm{\psi}_{l}}$, the beamforming vector is given by
\begin{equation}
\bm{w}_{\bm{\psi}_{l}} = \bm{a}_{t}(\theta_{\bm{\psi}_{l}}),
\label{eq}
\end{equation}
which means that the BS should align the beam direction exactly with the propagation channel to obtain the maximum power gain. However, the beam direction cannot always align with
the transmit signal. Hence, we consider the single main-lobe and single side-lobe at antennas of both BS and mobile user, and all lobes are approximated by a flat-top antenna pattern\cite{ref24}. That is, the single main-lobe with beam-width $\theta_{1}$ has antenna gain $G_{1}$ and each side lobe with identical beam-width $\theta_{2}$ has antenna gain $G_{2}$. We assume that the power spectrum density (PSD) of the main-lobe and side-lobe at a distance $r$ are denoted as $P_{m}$ and $P_{s}$. Hence, the total transmit power $P_{T}$ consists of the 
main-lobe and side-lobe radiation powers, which is given by \cite{ref24}
\begin{equation}
P_{T} = P_{m}2\pi r^{2}\left[1-\cos\frac{\theta_{1}}{2}\right]+N_{t}P_{s}2\pi r^{2}\left[1-\cos\frac{\theta_{2}}{2}\right],
\label{eq}
\end{equation}
where $P_{m} = G_{1}P_{T}/4\pi r^{2}$ and $P_{s} = G_{2}P_{T}/4\pi r^{2}$. Let us denote $k = G_{2}/G_{1}$ where $k \in (0,1)$, \textit{i.e.}, $G_{2} = kG_{1}$. 

For the associated signal transmission, we assume perfect alignment where both the BS and user utilize the main-lobe, achieving the squared gain $G_{1}^{2}$. For the interfering signal, the interfering BSs are randomly distributed in $[0, 2\pi)$. The transmit antenna gain $G_{Tx}$ at the transmitter and the receive antenna gain $G_{Rx}$ at the receiver are randomly chosen from a discrete set $\left\{G_{1}, G_{2} \right\}$ with probability $p_{a} = \frac{\theta_{1}}{2\pi}$ and $p_{b} = 1-p_{a}$, respectively. Let $G_{TRx} = G_{Tx}G_{Rx}$, we have
\begin{equation}\label{eq}
G_{TRx} = 
\left\{
\begin{array}{ll}
G_{1}^{2},  \quad &p_{1} = p_{a}^{2} \\
G_{1}G_{2}, \quad &p_{2} = 2p_{a}p_b \\
G_{2}^{2},  \quad &p_{3} = p_b^2. 
\end{array}
\right.
\end{equation}
Based on the antenna radiation pattern, the product of small-scale fading gain and beamforming gain of the BS at location $\bm{\psi}$ is computed as:
\begin{equation}
|\bm{h}_{\bm{\psi}}\bm{w}_{\bm{\psi}}|^{2} = N_{t}|\rho_{\bm{\psi}}|^{2}G_{TRx}.
\label{eq}
\end{equation}

\section{Performance Analysis}
In this section, we analyze the performance of AOA-based localization over a mmWave network. To evaluate the localization performance, we will introduce two metrics; $L$-localizability and AOA-based random CRLB.

\subsection{L-Localizability}
A target is \textit{localizable} if there are a sufficient number of participating BSs such that the localization procedure can be conducted. We introduce \textit{L-Localizability}, which is a probability to have $L$ localizable BSs within the network \cite{ref6}. Conventionally, commonly-accepted minimum values of $L$ for the unambiguous operation of a localization system are $2$, $3$, $3$ and $3$ for AOA, TOA or RSS, and TDOA, respectively \cite{ref4}. If we treat the interference originated from outside of the  circular disk with radius $R_{L}$ as a noise, the signal-to-interference-plus-noise ratio (SINR) of the link from the $k$-th BSs to the target can be expressed as a function of $L$ as
\begin{equation}
{\rm{SINR}}_{k}(L) = \frac{G_{1}^{2}|\rho_{\bm{\psi}_{k}}|^2r_{k}^{-\alpha}}{\sigma_{n}^2+J},
\label{eq}
\end{equation}
where $\sigma_{n}^{2} = \frac{\sigma_{T}^{2}+\sigma_{{\rm{out}}}^{2}}{\beta P_{t}N_{t}}$ is the normalized noise power, including the thermal noise power $\sigma_{T}^{2}$ and interference power $\sigma_{{\rm{out}}}^{2}$ outside the circular disk. The interference from nodes inside the disk, denoted by $J$, is expressed as:
\begin{equation}
J = \sum_{i=1, i \neq k}^{L-1}a_{i}G_{TRx,i}|\rho_{\bm{\psi}_{i}}|^2||\bm{\psi}_{i}||^{-\alpha}, 
\label{eq}
\end{equation}
where $a_i \in \{0, 1\}$ is utilized to simulate the network load. The probability of $a_i=1$ equals $q$ which is the probability of a BS inside the disk to be activated. The $a_{i}$ represents whether the BS is activated in the localization procedure and we assume that the activation probability $P\left(a_i = 1\right) = q$ is fixed throughout the localization procedure. 

For a given ${\bm{\psi}}\in\mathbb{R}^{2}$, a mobile device is said to be $L$-localizable if at least $L$ BSs participate in the localization procedure. Let us denote the SINR threshold as $\tau$ and the maximum number of BSs that can participate in the localization procedure as $\gamma$, defined as 
\begin{equation}
\gamma = {\mathop{\arg\max}_{L}}\left(L\cdot\prod_{k=1}^{L}\mathbb{I}\left({\rm{SINR}}_{k}(l)\geq\tau\right)\right),
\label{eq}
\end{equation}
where $\mathbb{I}(.)$ is the indicator function. Then, the $L$-localizability, denoted by $P_{L}$, is derived as:
\begin{equation}
P_{L} = P\left(\gamma\geq L\right) = \mathbb{E}\left[\prod_{k=1}^{L}\mathbb{I}\left({\rm{SINR}}_{k}(l)\geq\tau\right)\right].
\label{eq-15}
\end{equation}
Since the SINR from a BS farther from the mobile device is lower than that of the closer BS, the following inequality holds: $\mathbb{I}({\rm{SINR}}_{k}(L)\geq\tau)\geq\mathbb{I}({\rm{SINR}}_{l}(L)\geq\tau)$ for all $k\geq l\geq L$. 
Then, the $L$-localizability $P_L$ can be computed as follows
\begin{equation}
\begin{split}
P_{L} &= \mathbb{E}\left[\mathbb{I}({\rm{SINR}}_{L}(L)\geq \tau\right] = P\left({\rm{SINR}}_{L}(L)\geq \tau\right)\\
&= 1-P\left(|\rho_{\psi_{L}}|^2 \leq \frac{\tau ~ r_{L}^{\alpha}}{G_{1}^{2}} \left(\sigma_{n}^2+J\right)\right)\\
&\overset{a}{\simeq}1-\mathbb{E}_{r_{L}}\left[\left(1-e^{-\nu\frac{\tau}{G_{1}^{2}}r_{L}^{\alpha}\left(\sigma_{n}^{2}+J\right)}\right)^{M}\right] \\
&\overset{b}{=} \mathbb{E}_{r_{L}}\left[\sum_{i=1}^{M}(-1)^{i+1} \binom{M}{i}
e^{-s\sigma_n^{2}}\mathcal{L}_{I}(s)\right]\\
&= \int_{0}^{\infty}f_{r_{L}}(r)\sum_{i=1}^{M}(-1)^{i+1}\binom{M}{i} e^{-s\sigma_n^{2}}~\mathcal{L}_{I}(s)dr \\
\end{split}   
\label{eq-16} 
\end{equation}
where $\nu = M(M!)^{-\frac{1}{M}}$, $s = i\nu\frac{\tau}{G_{1}^{2}}r_{L}^{\alpha}$ and $\mathcal{L}_{I}(s) = \mathbb{E}_{I}[e^{-sJ}]$ is the Laplace transform of the interference. Step $(a)$ follows by the Alzer's inequality and $(b)$ is obtained based on the binomial expansion. The Laplace transform of the interference is \cite{ref25}
\begin{equation}
\begin{split}
\mathcal{L}_{I}(s) &=\mathbb{E}_{I}[e^{-sJ}]\\
&= \exp\bigg[-2\pi\lambda q \underbrace{\int_{r_{1}}^{r_{L}}(1-\mathbb{E}_{g_{{\bm{\psi}}}}[e^{-sJ}]){r}dr}_{\triangleq\Lambda}\bigg], \\
\end{split}
\label{eq-17}  
\end{equation}
where $g_{{\bm{\psi}}} = G_{TRx}|\rho_{{\bm{\psi}}}|^{2}$ represents the combined effect of antenna gain and channel gain at the location ${\bm{\psi}}$. The term $\Lambda$ is computed as:
\begin{equation}
\begin{split}
\Lambda &= \int_{r_{1}}^{r_{L}}\left(1-\mathbb{E}_{g_{{\bm{\psi}}}}[e^{-sJ}]\right)r dr \\
&= -\frac{1}{2} \Big[ r_{L}^{2} - \delta r_{L}^{2}\mathbb{E}_{g_{{\bm{\psi}}}}\left[E_{1+\delta}\left(sg_{{\bm{\psi}}}r_{L}^{-\alpha}\right)\right]\\
&\quad - r_{1}^{2} 
+  \delta r_{1}^{2}\mathbb{E}_{g_{{\bm{\psi}}}}\left[E_{1+\delta}\left(sg_{{\bm{\psi}}}r_{1}^{-\alpha}\right)\right]\Big],
\end{split}
\label{eq-18}  
\end{equation}
where $\delta = \frac{2}{\alpha}$ and $E_{1+\delta}(.)$ is the generalized exponential integral \cite{ref26}. The term $\mathbb{E}_{g_{{\bm{\psi}}}}[E_{1+\delta}(s g_{{\bm{\psi}}}r^{-\alpha})]$ is given by 
\begin{equation}
\begin{split}
&\mathbb{E}_{g_{{\bm{\psi}}}}[E_{1+\delta}(sg_{{\bm{\psi}}}r^{-\alpha})] \\
=&\frac{s^{\delta} \Gamma\left(-\delta\right)}{r^{2}}
\left[
\mathbb{E}_{g_{\psi}}\left[g_{\psi}^{\delta}\right] +\frac{\alpha}{2}-\sum_{p=1}^{\infty}\frac{(-s)^{p} \cdot \mathbb{E}_{g_{\psi}}\big[g_{\psi}^{\delta}\big] }{r^{\alpha p}\cdot p! \cdot(p-\delta)}
\right],
\end{split}
\label{eq-19}  
\end{equation}
and the fractional moment of $g_{{\bm{\psi}}}$ is derived as:
\begin{equation}
\begin{split}
&\mathbb{E}_{g_{{{\psi}}}}[g_{{{\psi}}}^{\delta}] =  \mathbb{E}_{|\rho_{{\bm{\psi}}}|^{2},G_{TRx}}\left[\left(|\rho_{{\bm{\psi}}}|^{2}G_{TRx}\right)^{\delta}\right] \\
&= \frac{\Gamma(M+\delta)}{\Gamma(M)M^{\delta}} \cdot \mathbb{E}_{G_{TRx}}(G_{TRx}^{\delta}) \\
&= \frac{\Gamma(M+\delta)}{\Gamma(M)M^{\delta}} \cdot \left[G_{1}^{2\delta}p_{a}^{2}+2(G_{1}G_{2})^{\delta}p_{a}p_{b}+G_{2}^{2\delta}p_{b}^{2}\right].
\end{split}
\label{eq-20}
\end{equation}
Hence, the $L$-localizability can be numerically evaluated by substituting (\ref{eq-17})-(\ref{eq-20}) into (\ref{eq-16}).

\subsection{Approximation of Cram$\bf \acute{\rm e}$r-Rao Lower Bound}
We derive the AOA-based random CRLB using the Fisher information matrix (FIM), denoted by $\bm{{\rm{I}}}(\bm{\psi})$ \cite{ref4}
\begin{equation}
\bm{{\rm{I}}}(\bm{\psi}) = \left(\frac{\partial {\bm{f}}_{\rm{AOA}}(\bm{\psi})}{\partial \bm{\psi}}\right)^{T}\bm{C}_{\rm{AOA}}^{-1}
\frac{\partial {\bm{f}}_{\rm{AOA}}(\bm{\psi})}{\partial \bm{\psi}}, 
\label{eq}
\end{equation}
where $\bm{C}_{\rm{AOA}}^{-1}$ represents the inverse of the noise covariance matrix and the derivative of 
${\bm{f}}_{\rm{AOA}}(\bm{\psi})$ which is the angle vector with respect to $\bm{\psi}$ are given by
\begin{equation}
\begin{split}
\bm{C}_{\rm{AOA}}^{-1} &= \text{diag}\left(\frac{1}{\sigma_{{\rm{AOA}},1}^{2}},\frac{1}{\sigma_{{\rm{AOA}},2}^{2}},\cdots,\frac{1}{\sigma_{{\rm{AOA}},L}^{2}}\right),\\
\newline \\
\frac{\partial {\bm{f}}_{\rm{AOA}}(\bm{\psi})}{\partial \bm{\psi}} 
&= -\begin{bmatrix}
\frac{y-y_{1}}{(x-x_{1})^{2}+(y-y_{1})^{2}} & \frac{x-x_{1}}{(x-x_{1})^{2}+(y-y_{1})^{2}} \\
\frac{y-y_{2}}{(x-x_{2})^{2}+(y-y_{2})^{2}} & \frac{x-x_{2}}{(x-x_{2})^{2}+(y-y_{2})^{2}} \\
\vdots & \vdots \\
\frac{y-y_{L}}{(x-x_{L})^{2}+(y-y_{L})^{2}} & \frac{x-x_{L}}{(x-x_{L})^{2}+(y-y_{L})^{2}} 
\end{bmatrix}.
\end{split}
\label{eq}
\end{equation}

Without loss of generality, $\sigma_{\rm{AOA}}$ is considered to be a known quantity and assumed to be identical for each BSs, \textit{i.e.}, $\sigma_{{\rm{AOA}},1} = \sigma_{{\rm{AOA}},2} = \cdots = \sigma_{{\rm{AOA}},L}$\cite{ref7}. The numerical value of $\sigma_{\rm{AOA}}$ depends on the average SNR of the mmWave networks, denoted by $\rm{\overline{SNR}}$, as follows
\begin{equation}
{\rm{\overline{SNR}}} = \frac{G_{c}P_{t}}{N_{0}W_{\rm{TOT}}}
\label{eq}
\end{equation}
where $G_{c}$ is the average channel gain, $N_{0}$ is the spectral density of the WGN, and $W_{\rm{TOT}}$ is the total system bandwidth \cite{ref27, ref28}. Hence, $\bm{{\rm{I}}}(\bm{\psi})$ is  
\begin{equation}
\begin{split}
&\bm{{\rm{I}}}_{\rm{AOA}}(\bm{\psi})  \\
= &\sigma_{\rm{AOA}}^{2}
\begin{bmatrix}
\sum_{i=1}^{L}\frac{(y-y_{i})^{2}}{r_{i}^{4}} & -\sum_{i=1}^{L}\frac{(x-x_{i})(y-y_{i})}{r_{i}^{4}} \\
-\sum_{i=1}^{L}\frac{(x-x_{i})(y-y_{i})}{r_{i}^{4}} & \sum_{i=1}^{L}\frac{(x-x_{i})^{2}}{r_{i}^{4}} 
\end{bmatrix}.
\end{split}
\label{eq-24}
\end{equation}

To assess the distribution of the CRLB, we introduce the position error bound (PEB), which is the square root of the CRLB \cite{ref29}. We will denote the PEB by $S$ and its closed-form expression can be obtained by using (\ref{eq-24})
\begin{equation}
    \begin{split}
        S &\triangleq \sqrt{\rm{CRLB}} = \sqrt{{\rm{tr}}(\bm{I}_{\rm{AOA}}^{-1}(\bm{\psi}))} = \sigma_{\rm{AOA}}\frac{\sqrt{L}}{\sqrt{Q_{1}-Q_{2}}},
    \end{split}
\label{eq-25}
\end{equation}
where $Q_{1}$ and $Q_{2}$ are
\begin{equation}
\begin{split}
Q_{1} &=  \sum_{i=1}^{L}\frac{(y_{i}-y_{t})^{2}}{r_{i}^{4}}\sum_{j=1}^{L}\frac{(x_{j}-x_{t})^{2}}{r_{j}^{4}}, \\
Q_{2} &= \sum_{i=1}^{L}\frac{(x_{i}-x_{t})^{2}(y_{i}-y_{t})^{2}}{r_{i}^{8}}.
\end{split}
\label{eq-26}
\end{equation}
Since (\ref{eq-25}) and (\ref{eq-26}) are functions of $2L$ random variables, \textit{i.e.}, $\left(x_i, y_i\right)$ for $1 \leq i \leq L$, we need to simplify (\ref{eq-26}) using its asymptotic bounds, which will enable us to characterize the distribution of (\ref{eq-25}). In the following proposition, we derived a tight upper bound for $Q_{1}-Q_{2}$ and through  simulation, we verified that approximation error is less than $5\%$ for $L \geq 8$. 

\begin{prop}
The random variable $Q_{1}-Q_{2}$ from (\ref{eq-25}) can be upper bounded as follows
\begin{equation}
Q_{1}-Q_{2} \leq \frac{1}{4}\left[\left(\sum_{i=1}^{L}\frac{1}{r_{i}^{2}}\right)^2 - \sum_{i=1}^{L}\frac{1}{r_{i}^{4}}\right]
\label{eq-27}
\end{equation}
\end{prop}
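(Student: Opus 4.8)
The plan is to strip the geometric content from (\ref{eq-26}) and reduce the claim to a single scalar inequality. I introduce $u_i=(x_i-x_t)^2/r_i^4$ and $v_i=(y_i-y_t)^2/r_i^4$; since $(x_i-x_t)^2+(y_i-y_t)^2=r_i^2$, these satisfy the key constraint $u_i+v_i=1/r_i^2$. In this notation $Q_1=\big(\sum_i v_i\big)\big(\sum_j u_j\big)$ and $Q_2=\sum_i u_iv_i$, so the diagonal $i=j$ terms of the double sum defining $Q_1$ cancel $Q_2$ exactly and leave
\[ Q_1-Q_2=\sum_{i\neq j}u_iv_j=\sum_{i<j}\big(u_iv_j+u_jv_i\big). \]
Writing $w_i:=1/r_i^2=u_i+v_i$, the claimed right-hand side of (\ref{eq-27}) equals $\tfrac14\big[(\sum_i w_i)^2-\sum_i w_i^2\big]=\tfrac12\sum_{i<j}w_iw_j$, so the proposition is precisely the pairwise-aggregated inequality $\sum_{i<j}(u_iv_j+u_jv_i)\le\tfrac12\sum_{i<j}(u_i+v_i)(u_j+v_j)$.

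First I would try to prove this pair by pair. A short expansion gives the compact per-pair slack identity $\tfrac12(u_i+v_i)(u_j+v_j)-(u_iv_j+u_jv_i)=\tfrac12(u_i-v_i)(u_j-v_j)$, so a termwise proof would reduce to $(u_i-v_i)(u_j-v_j)\ge0$ for every pair. Summing the slacks instead collapses the whole proposition onto the single scalar statement
\[ \Big(\sum_i t_i\Big)^2\ge\sum_i t_i^2,\qquad t_i:=u_i-v_i=\frac{(x_i-x_t)^2-(y_i-y_t)^2}{r_i^4}=\frac{\cos 2\theta_i}{r_i^2}, \]
where the last form uses the target-to-BS angle $\theta_i$ of (\ref{eq-6}). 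Establishing this aggregate sign condition, equivalently $\sum_{i\neq j}t_it_j\ge0$, is the entire content of the proposition.

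This is the step I expect to be the main obstacle. The per-pair factor $(u_i-v_i)(u_j-v_j)$ is sign-indefinite, because $\cos 2\theta_i$ changes sign as $\theta_i$ crosses $\pm\pi/4$, so the clean termwise route does not close and the cross term $\sum_{i\neq j}t_it_j$ must be handled globally. What is unconditional is only the weaker trace-type bound $Q_1\le\tfrac14(\sum_i w_i)^2$, obtained by one application of $ab\le\tfrac14(a+b)^2$ to the product of sums in $Q_1$; the genuine work is recovering the subtracted $-\tfrac14\sum_i r_i^{-4}$ refinement. To attack the sign condition I would exploit the structure actually present: the distances are ordered, $r_1\le\cdots\le r_L$, so the envelope $|t_i|\le w_i$ is monotone and yields $\sum_i t_i^2\le\sum_i w_i^2$, after which I would seek a rearrangement or dominant-regime argument forcing $\sum_{i\neq j}t_it_j\ge0$ for the mmWave geometry. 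The crux is precisely this sign, and it is consistent with the reported behaviour: the residual $\big|(\sum_i t_i)^2-\sum_i t_i^2\big|$ is of order $L$ while the leading block $(\sum_i w_i)^2$ of the right-hand side is of order $L^2$, which is the mechanism that keeps the approximation error below $5\%$ for $L\ge 8$.
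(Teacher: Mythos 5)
Your reduction is exact, and it is sharper than the paper's own bookkeeping: the diagonal cancellation $Q_1-Q_2=\sum_{i\neq j}u_iv_j$, the rewriting of the right-hand side of (\ref{eq-27}) as $\tfrac12\sum_{i<j}w_iw_j$, and the per-pair slack identity $\tfrac12(u_i+v_i)(u_j+v_j)-(u_iv_j+u_jv_i)=\tfrac12(u_i-v_i)(u_j-v_j)$ are all correct, so the proposition is indeed equivalent to $\sum_{i\neq j}t_it_j\ge 0$ with $t_i=\cos(2\theta_i)/r_i^2$. The gap you flag at this point is real, and it is unbridgeable: the aggregate sign condition is false pointwise. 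Take $L=2$, $r_1=r_2=1$, $\theta_1=0$, $\theta_2=\pi/2$ (two equidistant BSs at orthogonal bearings); then $Q_1-Q_2=1$ while the right-hand side of (\ref{eq-27}) equals $\tfrac12$, and correspondingly $t_1t_2=-1<0$. Since in this model the angles are uniform on $[0,2\pi)$ and independent of the ordered radii, such configurations occur with positive probability, so the ordered-distance, rearrangement, or ``dominant-regime'' arguments you propose for the final step cannot succeed — no argument can, because the claimed deterministic inequality does not hold.

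For comparison, the paper's proof consists of exactly the two halves you isolated, and it closes the second half only by a reversed inequality. The $Q_1$ half coincides with your unconditional trace-type bound: writing $Q_1=\xi\left(\sum_j r_j^{-2}-\xi\right)$ with $\xi=\sum_i\sin^2(\theta_i)r_i^{-2}$ and maximizing the quadratic in $\xi$ gives $Q_1\le\tfrac14\left(\sum_i r_i^{-2}\right)^2$, which is correct. The $Q_2$ half, step (a) of (\ref{eq-27b}), asserts $Q_2\ge\tfrac14\sum_i r_i^{-4}$ from $(x_i-x_t)^2+(y_i-y_t)^2\ge 2(x_i-x_t)(y_i-y_t)$; but that AM--GM step actually yields $(x_i-x_t)^2(y_i-y_t)^2\le\tfrac14 r_i^4$, i.e.\ $Q_2\le\tfrac14\sum_i r_i^{-4}$ — the opposite direction of what is needed (indeed $Q_2=\tfrac14\sum_i\sin^2(2\theta_i)r_i^{-4}$, which is at most, not at least, $\tfrac14\sum_i r_i^{-4}$). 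Consequently Proposition~1 is in truth an approximation/high-probability statement rather than a deterministic upper bound, which is consistent with the paper's own empirical report of error below $5\%$ for $L\ge 8$ and with your observation that the residual $\lvert(\sum_i t_i)^2-\sum_i t_i^2\rvert$ is $O(L)$ against the $O(L^2)$ leading term. So your blind attempt is incomplete as a proof, but its diagnosis is more faithful than the paper's argument: the step you honestly could not close is precisely the step the paper gets wrong.
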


\begin{proof}
First, we derive the lower bound of $Q_2$ as follows
\begin{equation}
    \begin{split}
        Q_{2} 
        &\overset{(a)}{\geq} \sum_{i=1}^{L} \frac{\frac{1}{4}[(x_{i}-x_{t})^{2}+(y_{i}-y_{t})^{2}]^{2}}{r_{i}^{8}}
        \overset{(b)}{=} \sum_{i=1}^{L}\frac{1}{4r_{i}^{4}},
    \end{split}
    \label{eq-27b}
\end{equation}
where the inequality $(x_{i}-x_{t})^{2}+(y_{i}-y_{t})^{2} \geq 2(x_{i}-x_{t})(y_{i}-y_{t})$ is applied to step (a) and Cartesian coordinates is converted to polar coordinate in step (b). As shown in Fig. 2, the polar coordinate of $(x_i, y_i)$ is given by
\begin{equation}
    \begin{split}
        x_i - x_t = r_i \cos\left(\theta_i\right), \quad 
        y_i - y_t = r_i \sin\left(\theta_i\right).
    \end{split}
    \label{eq-28}
\end{equation}

Next, we derive the upper bound of $Q_1$
\begin{equation}
    \begin{split}
    Q_1 &= \sum_{i=1}^{L}\frac{(y_{i}-y_{t})^{2}}{r_{i}^{4}}\sum_{j=1}^{L}\frac{(x_{j}-x_{t})^{2}}{r_{j}^{4}}\\
    &= 
\sum_{i=1}^{L}\frac{\sin^2\left(\theta_i\right)}{r_{i}^{2}}
\sum_{j=1}^{L}\frac{\cos^2\left(\theta_j\right)}{r_{j}^{2}},
    \end{split}
    \label{eq-28-b}
\end{equation}
where we will maximize $Q_1$ with respect to the phase $\{\theta_i\}$  for a given distance $\{\theta_i\}$. Then, (\ref{eq-28-b}) can be expressed as 
\begin{equation}
    \begin{split}
    Q_1 &= \sum_{i=1}^{L}\frac{\sin^2\left(\theta_i\right)}{r_{i}^{2}}
\sum_{j=1}^{L}\frac{1 - \sin^2\left(\theta_j\right)}{r_{j}}
= \xi \left(\sum_{j=1}^{L} \frac{1}{r_j^2} - \xi \right),
    \end{split}
    \label{eq-28-c}
\end{equation}
where we denote $\xi \triangleq \sum_{i=1}^{L}\frac{\sin^2\left(\theta_i\right)}{r_{i}^{2}}$. The first order derivative of $Q_1$ is zero when $\xi^{\ast} = \frac{1}{2}\sum_{i=1}^{L}\frac{1}{r_{i}^{2}}$ and the second order derivative of $Q_1$ has a negative value at $\xi^{\ast}$ as follows
\begin{equation}
    \begin{split}
    \frac{\partial Q_1}{\partial \xi} &= \sum_{i=1}^{L}\frac{1}{r_{i}^{2}} - 2 \xi = 0 ~\Rightarrow~
     \xi^{\ast} = \frac{1}{2}\sum_{i=1}^{L}\frac{1}{r_{i}^{2}},
    \\
    \frac{\partial^2 Q_1}{\partial \xi^2} &= - 2 < 0.
    \end{split}
    \label{eq-28-d}
\end{equation}
Hence, the upper bound of $Q_1$ is given by 
\begin{equation}
    \begin{split}
    Q_1 \leq \max_{\{\theta_i\}}Q_1\bigg|_{\xi = \xi^{\ast}} = \left(\frac{1}{2}\sum_{i=1}^{L}\frac{1}{r_{i}^{2}}\right)^2.
    \end{split}
    \label{eq-28-e}
\end{equation}
We obtain (\ref{eq-27}) by (\ref{eq-27b}) and (\ref{eq-28-e}). This completes the proof.
\end{proof}

Based on Proposition 1, the PEB $S$ is lower bounded by
\begin{equation}
    \begin{split}
        S &\geq \frac{2 \sigma_{\rm{AOA}} \cdot \sqrt{L}}{\sqrt{
        \left(\sum_{i=1}^{L}\frac{1}{r_{i}^{2}} \right)^2 - \sum_{i=1}^{L}\frac{1}{r_{i}^{4}}}
        } 
        = \frac{2 \sigma_{\rm{AOA}} \cdot \sqrt{L}}{\sqrt{
        \sum\limits_{\substack{\scriptstyle i, j = 1\\\scriptstyle i\neq j}}^{L}
        \frac{1}{r_{i}^2 r_{j}^2}
        }}.
\end{split}
\label{eq-31}
\end{equation}
In the following assumption, we introduced an approximation of (\ref{eq-31}), which provides a tractable asymptotic bound of $S$. Through simulation, we justified the approximation accuracy. 

\begin{assumption}
Assume that the link distances are sorted in an ascending order, \textit{i.e.}, $R = \left[r_{1}, \cdots, r_{L}\right]$ and $r_{1} \leq r_{2} \leq \cdots \leq r_{L}$. The denominator of (\ref{eq-31}) can be approximated as follows
\begin{equation}
D \triangleq
        \sum\limits_{\substack{\scriptstyle i, j = 1\\\scriptstyle i\neq j}}^{L}
        \frac{1}{r_{i}^2 r_{j}^2}
 \approx \frac{L(L-1)}{r_{\lceil L/4 \rceil}^{4}},
\label{eq-32}
\end{equation}
where $r_{\lceil L/4 \rceil}$ is the $\lceil L/4 \rceil$-th link distance in the ordered set $R = \left[r_{1}, \cdots, r_{L}\right]$ and $L$ is the number of participating BSs.
\end{assumption}

\begin{remark}
We validated (\ref{eq-32}) through simulation, where we repeated the realization of the anchor nodes $10$ million times. Since the set $R$ is sorted, the term $D$ in (\ref{eq-32}) is bounded by 
\begin{equation}
    \begin{split}
    \frac{L\left(L-1\right)}{r_L^4} \leq 
        \sum\limits_{\substack{\scriptstyle i, j = 1\\\scriptstyle i\neq j}}^{L}
        \frac{1}{r_{i}^2 r_{j}^2}
        \leq 
        \frac{L\left(L-1\right)}{r_1^4}. 
    \end{split}
    \label{eq:eq-35}
\end{equation}
We attempt to find the $k$-th term $r_k$ in set $R$ that provides the most accurate approximation to $D$. To solve this problem, we used heuristic approach and evaluated the mutual information between $D$ and the random variable $r_k$ for a given $L$ as follows
\begin{equation}
    \begin{split}
        \min_{1 \leq k \leq L}\mathbb{E}\left[\bigg\lvert D - \frac{L(L-1)}{r_{k}^{4}}\bigg\rvert^2\right] ~\Leftrightarrow~  \max_{1 \leq k \leq L} I(D;r_{k}|L=l), 
    \end{split}
    \label{eq:eq-37}
\end{equation}
where the mutual information conditioned on $L$ is defined as
\begin{equation}
I\left(D;r_{k}|L=l\right) = h\left(D|L=l\right) - h\left(D|r_{k},L=l\right), 
\label{eq-34}
\end{equation}
and the differential entropies are given by
\begin{equation}
\begin{split}
h(D|L=l) &= -\sum\limits_{d\in \rm{D}}f_{D}(d|\mathit{l})\log_{2}{f_{D}(d|\mathit{l})}, \\
h(D|r_{k},L=l) &= -\sum\limits_{\substack{r_{k}\in R_{k},\\d\in\rm{D}}}  f_{D;r_{k}}(d|r, l)\log_{2}{f_{D}(d|r,l)},
\end{split}
\label{eq-35}
\end{equation}
where $R_{k}$ and $\rm{D}$ are the supports of $r_{k}$ and $d$, respectively\cite{ref30}. This approach, which was motivated by \cite{ref31}, can search for the $r_{l}$ that contains the most information of $D$. Through an extensive simulation across a range of $L$, we observed that the $\lceil L/4 \rceil$-th distance maximizes the mutual information as illustrated in Fig. \ref{fig::MI}, which justifies Assumption 1, and thus we can use the $\lceil L/4 \rceil$-th distance to approximate $D$. 
\begin{figure}
\centerline{\includegraphics[width=\columnwidth]{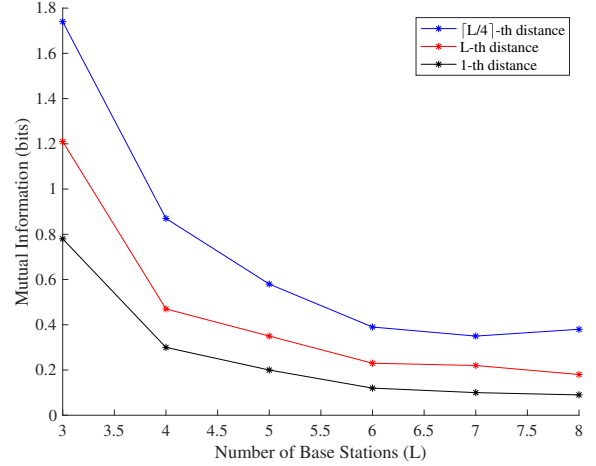}}
\caption{Impact of distance selection on mutual information}
\label{fig::MI}
\end{figure}
\end{remark}

\begin{figure*}
  \centering
  \begin{minipage}[b]{0.48\textwidth}
    \includegraphics[width=\columnwidth]{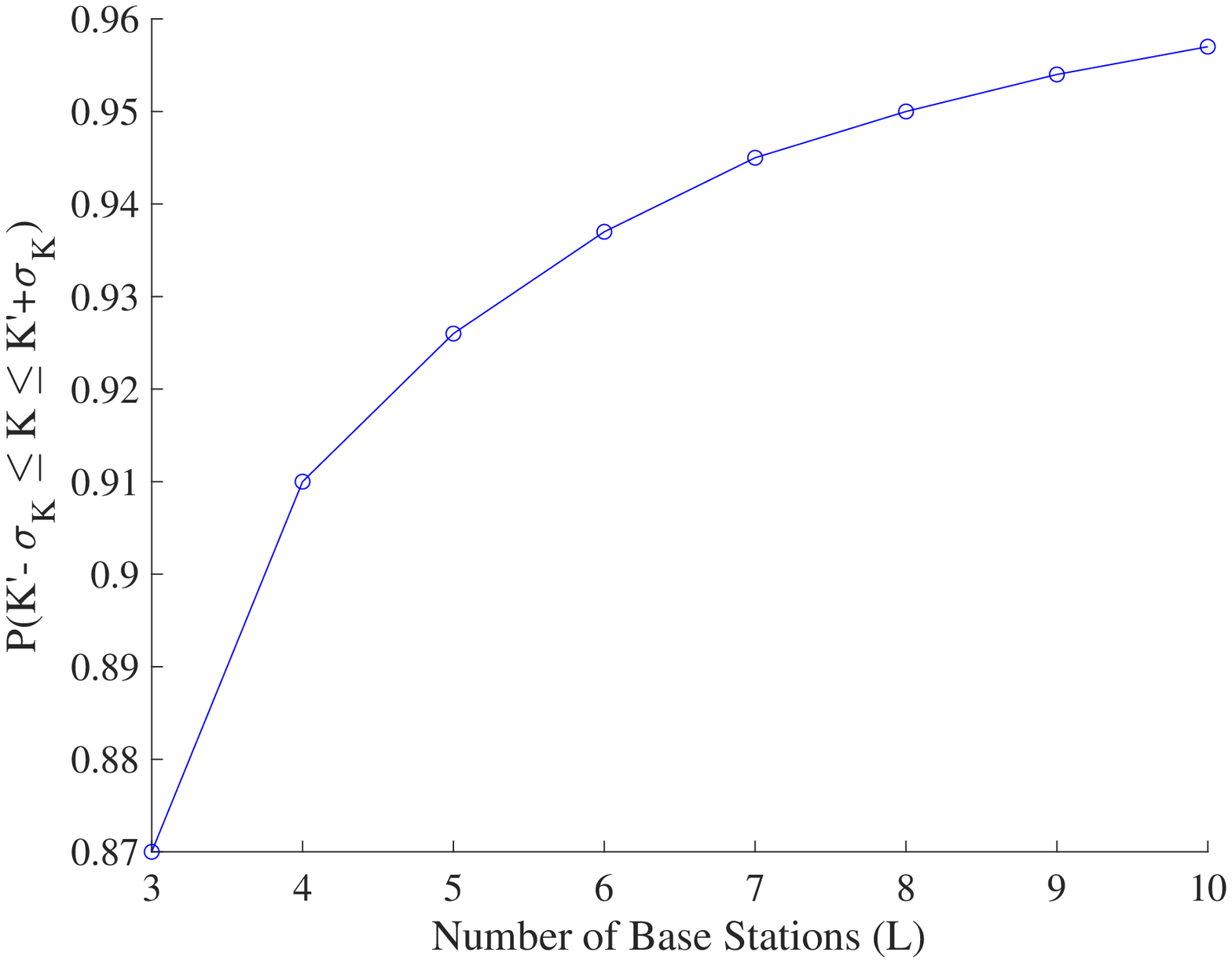}
    \caption{Probability of $K'-\sigma_{K} \leq K \leq K'+\sigma_{K}$}
    \label{fig3}
  \end{minipage}
  ~
  \begin{minipage}[b]{0.48\textwidth}
    \includegraphics[width=\columnwidth]{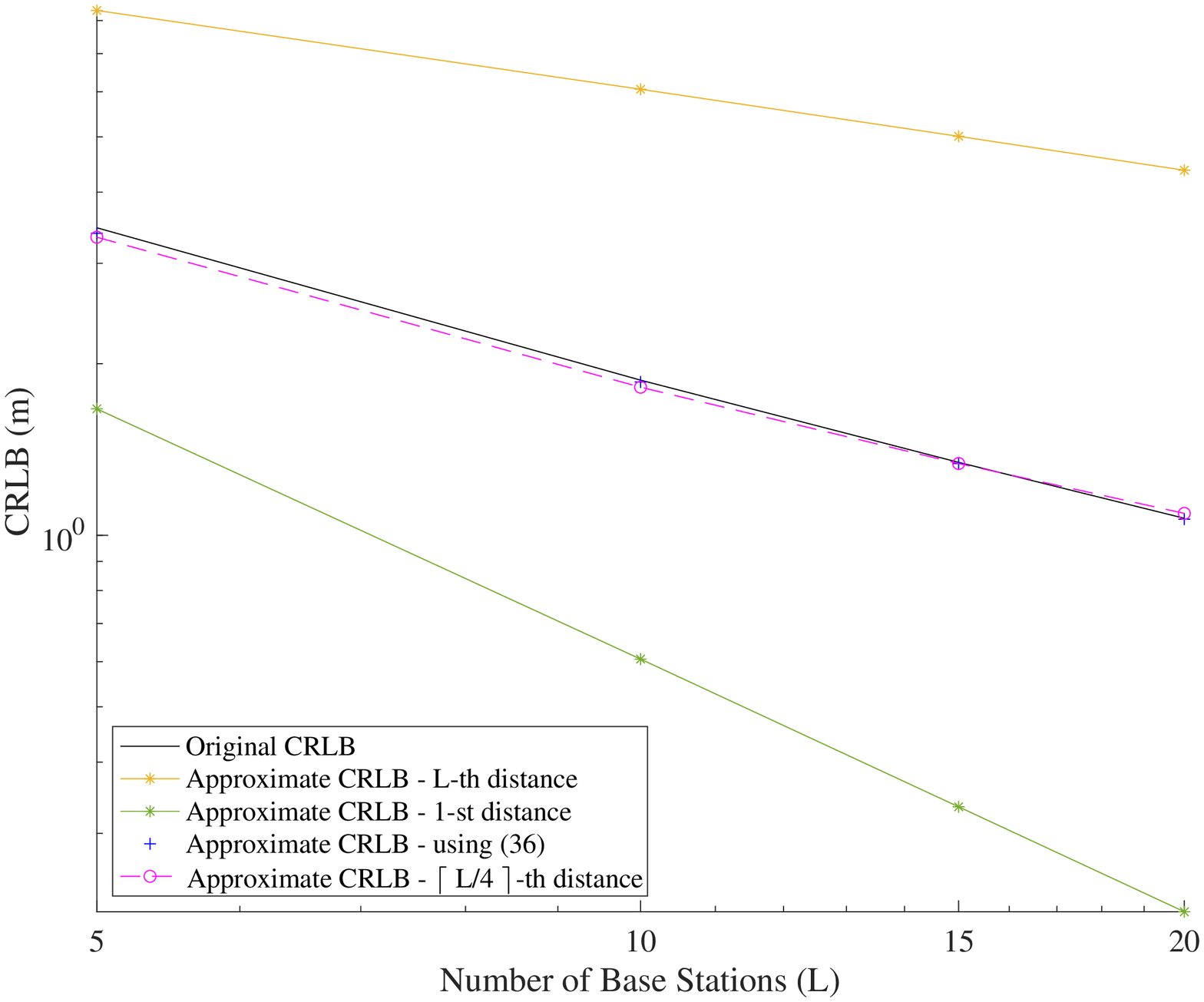}
    \caption{Original CRLB compared with approximate CRLB}
    \label{fig5}
  \end{minipage}
\end{figure*}

Based on (\ref{eq-31}) and (\ref{eq-32}), we can approximate $S$ by
\begin{equation}
S \approx \frac{2 \sigma_{\rm{AOA}}}{\sqrt{L-1}} ~r_{\lceil L/4 \rceil}^{2},
\label{eq-36}
\end{equation}
where $S$ is a function of $r_{\lceil L/4 \rceil}$. In the following proposition, we derived the distribution of $S$ using order statistic. 
\begin{prop}
Assume that the number of participating BSs $L$, the variance of the range error $\sigma_{\rm{AOA}}$, and the link distance to the closest node $r_1$ and furthest node $r_L$ are known. Then, the CDF of PEB $S$ is given by 
\begin{equation}
    \begin{split}
    F_{S}\left(s|L,\sigma_{\rm{AOA}}\right) = F_{r_{\lceil L/4 \rceil}}\left[\sqrt{\frac{s}{2}\cdot\frac{\sqrt{L-1}}{\sigma_{\rm{AOA}}}}~ \bigg|L, \sigma_{\rm{AOA}}\right], 
    \end{split}
\label{eq:s-dist}
\end{equation}
where $F_{r_n}\left(r\right)$ is the CDF of the $n$-th order statistic. 
\end{prop}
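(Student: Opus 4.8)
The plan is to treat the approximation (\ref{eq-36}) as a deterministic, strictly increasing change of variables from the single order statistic $r_{\lceil L/4 \rceil}$ to $S$, and then push the CDF through this transformation. Conditioned on $L$ and $\sigma_{\rm{AOA}}$, equation (\ref{eq-36}) reads $S = \frac{2\sigma_{\rm{AOA}}}{\sqrt{L-1}}\, r_{\lceil L/4\rceil}^{2}$, so that $S$ is a scalar monotone function of the random variable $r_{\lceil L/4\rceil}$; all of the remaining geometric randomness of the $2L$ coordinates has already been absorbed into the bound of Proposition~1 and the approximation of Assumption~1. First I would verify monotonicity: since $r_{\lceil L/4\rceil}\ge 0$ and the prefactor $2\sigma_{\rm{AOA}}/\sqrt{L-1}$ is strictly positive, the map $r \mapsto \frac{2\sigma_{\rm{AOA}}}{\sqrt{L-1}} r^{2}$ is strictly increasing on $[0,\infty)$. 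This guarantees that the preimage of a half-line is again a half-line, so the CDF transforms without any Jacobian correction and no absolute-value branch arises.

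Next I would invert the defining inequality. Writing $F_S(s\mid L,\sigma_{\rm{AOA}}) = P(S \le s)$ and substituting (\ref{eq-36}), the event $\{S \le s\}$ is equivalent to $\{r_{\lceil L/4\rceil}^{2} \le \frac{s\sqrt{L-1}}{2\sigma_{\rm{AOA}}}\}$. Taking the positive square root, which is licensed because $r_{\lceil L/4\rceil}\ge 0$, yields $\{r_{\lceil L/4\rceil} \le \sqrt{\frac{s}{2}\cdot\frac{\sqrt{L-1}}{\sigma_{\rm{AOA}}}}\}$. Applying the definition of the order-statistic CDF $F_{r_{\lceil L/4\rceil}}$ to this event reproduces exactly (\ref{eq:s-dist}), so the transformation step itself is short and essentially routine.

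The only substantive content beyond this inversion is the object $F_{r_{\lceil L/4\rceil}}$ itself, and this is where I expect the real work to lie. Because the proposition conditions on both endpoints $r_1$ and $r_L$, the intermediate $L-2$ link distances are, by the BPP structure invoked after (\ref{eq-1}), i.i.d.\ on $[r_1,r_L]$ with the density and CDF in (\ref{eq-2}), namely $F_{r_l}(r) = (r^{2}-r_1^{2})/(r_L^{2}-r_1^{2})$. I would therefore obtain $F_{r_{\lceil L/4\rceil}}$ by applying the standard binomial order-statistic formula to this conditional CDF, taking care with the indexing convention, that is, whether the rank $\lceil L/4\rceil$ is counted within the full ordered set including the two fixed endpoints or within the intermediate i.i.d.\ sample, since an off-by-one shift in the rank propagates into the argument of the incomplete-beta representation. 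Once that bookkeeping is settled, substituting the resulting order-statistic CDF into the monotone relation derived above gives (\ref{eq:s-dist}) and completes the proof.
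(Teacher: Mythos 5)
Your proposal is correct and takes essentially the same route as the paper: invert the monotone map $S = \frac{2\sigma_{\rm{AOA}}}{\sqrt{L-1}}\, r_{\lceil L/4 \rceil}^{2}$ from (\ref{eq-36}) so that $\{S \leq s\}$ becomes $\{r_{\lceil L/4 \rceil} \leq \sqrt{\tfrac{s}{2}\cdot\tfrac{\sqrt{L-1}}{\sigma_{\rm{AOA}}}}\}$, and evaluate the resulting probability with the standard binomial order-statistic CDF built from the BPP distance law in (\ref{eq-2}). The paper merely performs the two steps in the opposite order (deriving $F_{r_n}$ first, then pushing the event through the transformation), and your caveat about whether the rank is counted within the full $L$-point set or the $L-2$ intermediate points is a bookkeeping subtlety the paper itself glosses over by applying the $L$-sample formula directly.
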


\begin{proof}
First, the PDF of the $n$-th order statistic $f_{r_{n}}(r)$ is \cite{ref32}
\begin{equation}
\begin{split}
f_{r_{n}}(r) = L f_{r_{l}}(r) \binom{L-1}{n-1} F_{r_{l}}(r)^{n-1} \left(1-F_{r_{l}}(r)\right)^{L-n},
\end{split}
\label{eq-37}
\end{equation}
where $f_{r_{l}}(r)$ and $F_{r_{l}}(r)$ are given in (\ref{eq-2}). The CDF $F_{r_n}\left(r\right)$ can be derived by integrating (\ref{eq-37}) as follows 
\begin{equation}
\begin{split}
F_{r_{n}}(r)
&= \int_{0}^{r_{L}}f_{r_{n}}(r)dr \\
&= \sum_{j=n}^{L} \binom{L}{j}
F_{r_{l}}(r)^{j}\left(1-F_{r_{l}}(r)\right)^{L-j} \\
&= \sum_{j=n}^{L} \binom{L}{j}
\left(\frac{r^2}{r_{L}^{2}-r_{1}^{2}}\right)^{j}\left(1-\frac{r^2}{r_{L}^{2}-r_{1}^{2}}\right)^{L-j}.
\end{split}
\label{eq-38}
\end{equation}

Hence, the CDF of $S$, denoted by $F_{S}(s|L, \sigma_{aoa}) = P[S\leq s|L, \sigma_{aoa}]$ is readily computed as
\begin{equation}
\begin{split}
F_{S}\left(s|L,\sigma_{\rm{AOA}}\right) &= P\left[r_{n} \leq \sqrt{\frac{s}{2}\frac{\sqrt{L-1}}{\sigma_{\rm{AOA}}}}\bigg|L, \sigma_{\rm{AOA}}\right],
\end{split}
\label{eq-39}
\end{equation}
where the PDF of $S$ can be computed by differentiating (\ref{eq-39}).
This completes the proof.
\end{proof}

\begin{remark}
We evaluated the approximation accuracy of (\ref{eq-31}) and (\ref{eq-36}) through Monte Carlo simulation. Let us denote 
\begin{equation}
    \begin{split}
        K =& \sum_{i=1}^{L}\frac{\sin^{2}\theta_{i}}{r_{i}^{2}}\sum_{j=1}^{L}\frac{\cos^{2}\theta_{j}}{r_{j}^{2}}, \quad 
        K' = \frac{1}{4}\left(\sum_{i=1}^{L}\frac{1}{r_{i}^{2}}\right)^2, 
    \end{split}
\label{eq-30}
\end{equation}
where $K$ is equal to $Q_1$ in proposition 1 and $K'$ represents the upper bound of $Q_1$ in (\ref{eq-28-e}). We utilized Monte Carlo simulation of $10$ million realizations to compute the empirical distribution of $K$ and determine the value of $P\left[K'-\sigma_{K} \leq K \leq K'+\sigma_{K}\right]$, where the $\sigma_{K}$ is the standard deviation of $K$. Fig. \ref{fig3} shows the probability $P\left[K'-\sigma_{K} \leq K \leq K'+\sigma_{K}\right] $ versus a range of $L$. It is observed that $K'$ can approximate $Q$ with high accuracy. For $L \geq 10$, the approximation accuracy is above 96$\%$. 

Furthermore, we compared the CRLB computed by using (\ref{eq-25}), (\ref{eq-31}), (\ref{eq-36}), the 1-st and the L-th ordered distance in Fig. \ref{fig5}. It is observed that the approximations of CRLB using the $L$-th and $1$-st distances cannot approach the original CRLB. However, the asymptotic bound using (\ref{eq-31}) and the approximation based on (\ref{eq-36}) closely match the original CRLB curve, which justifies Proposition 1 and Assumption 1. 
\end{remark}

\section{Simulation Results}

\begin{figure*}
  \centering
  \begin{minipage}[b]{0.48\textwidth}
    \includegraphics[width=\columnwidth]{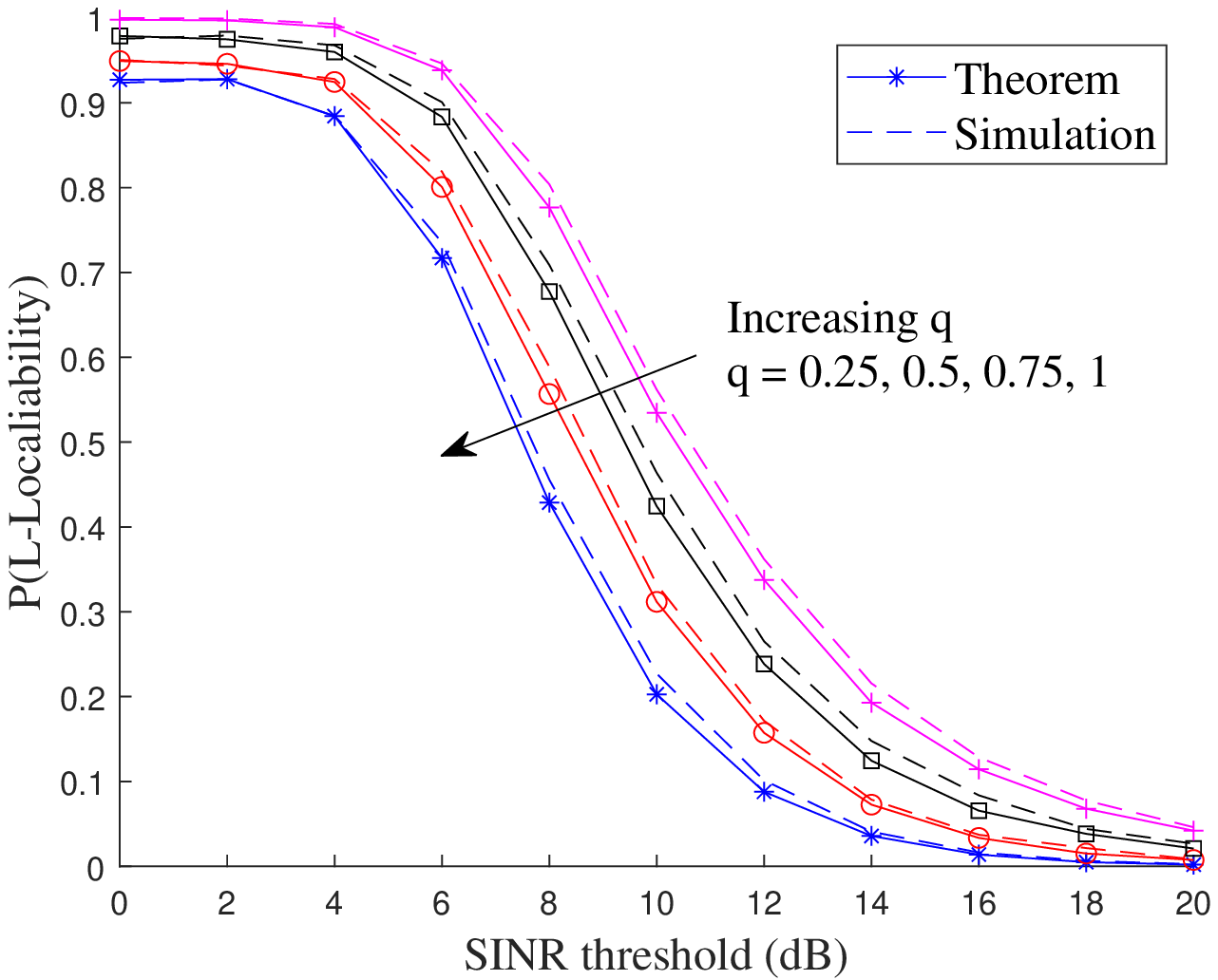}
    \caption{Impact of the network load on $L$-localizability when $\alpha = 4$, $N_t = 64$ and $M = 5$}
    \label{fig6}
  \end{minipage}
  \hfill 
  \begin{minipage}[b]{0.48\textwidth}
    \includegraphics[width=\columnwidth]{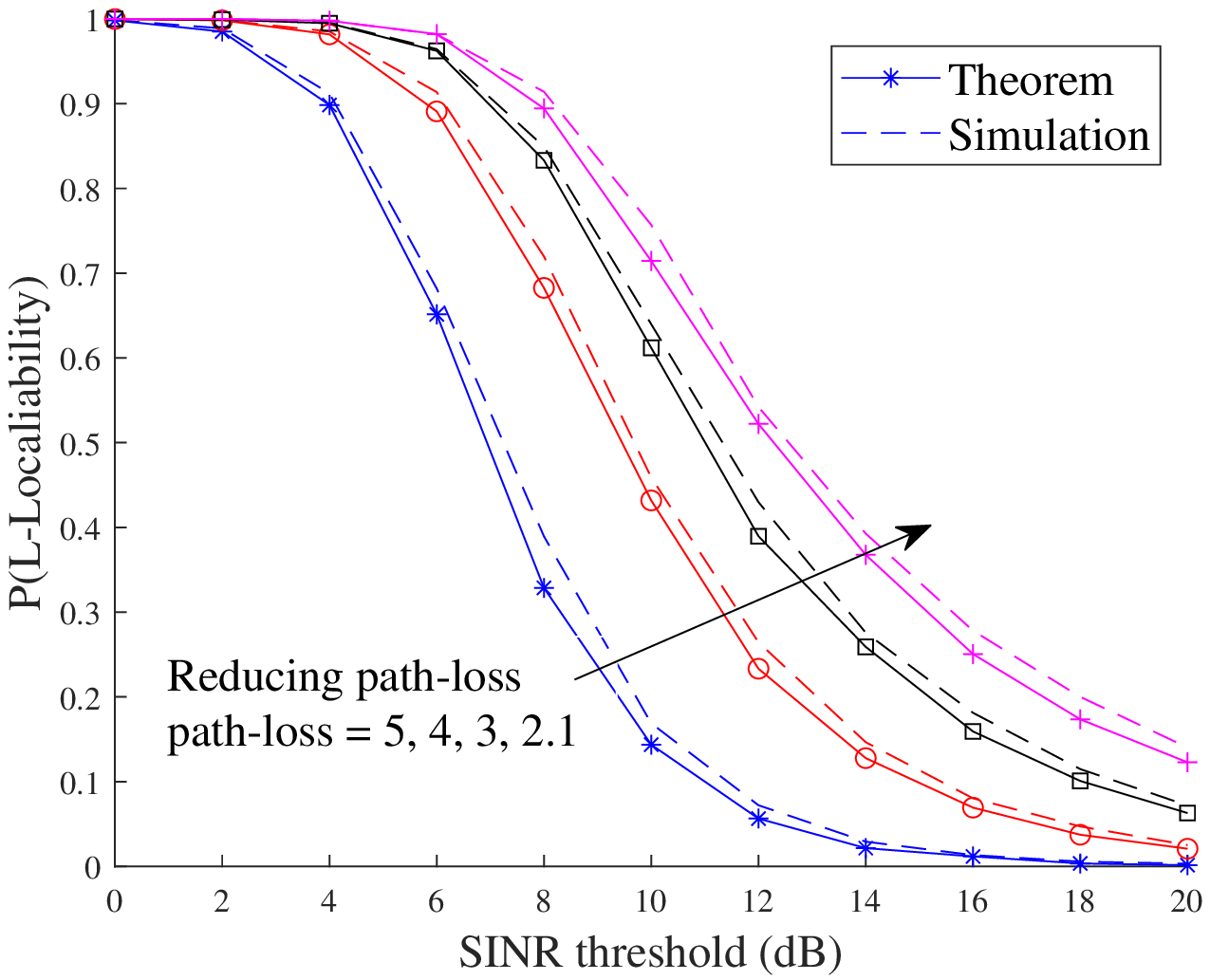}
    \caption{Impact of path-loss on $L$-localizability when $\alpha = 4$, $N_t = 64$ and $q = 0.75$}
    \label{fig7}
  \end{minipage}\\
  \begin{minipage}[b]{0.48\textwidth}
    \includegraphics[width=\columnwidth]{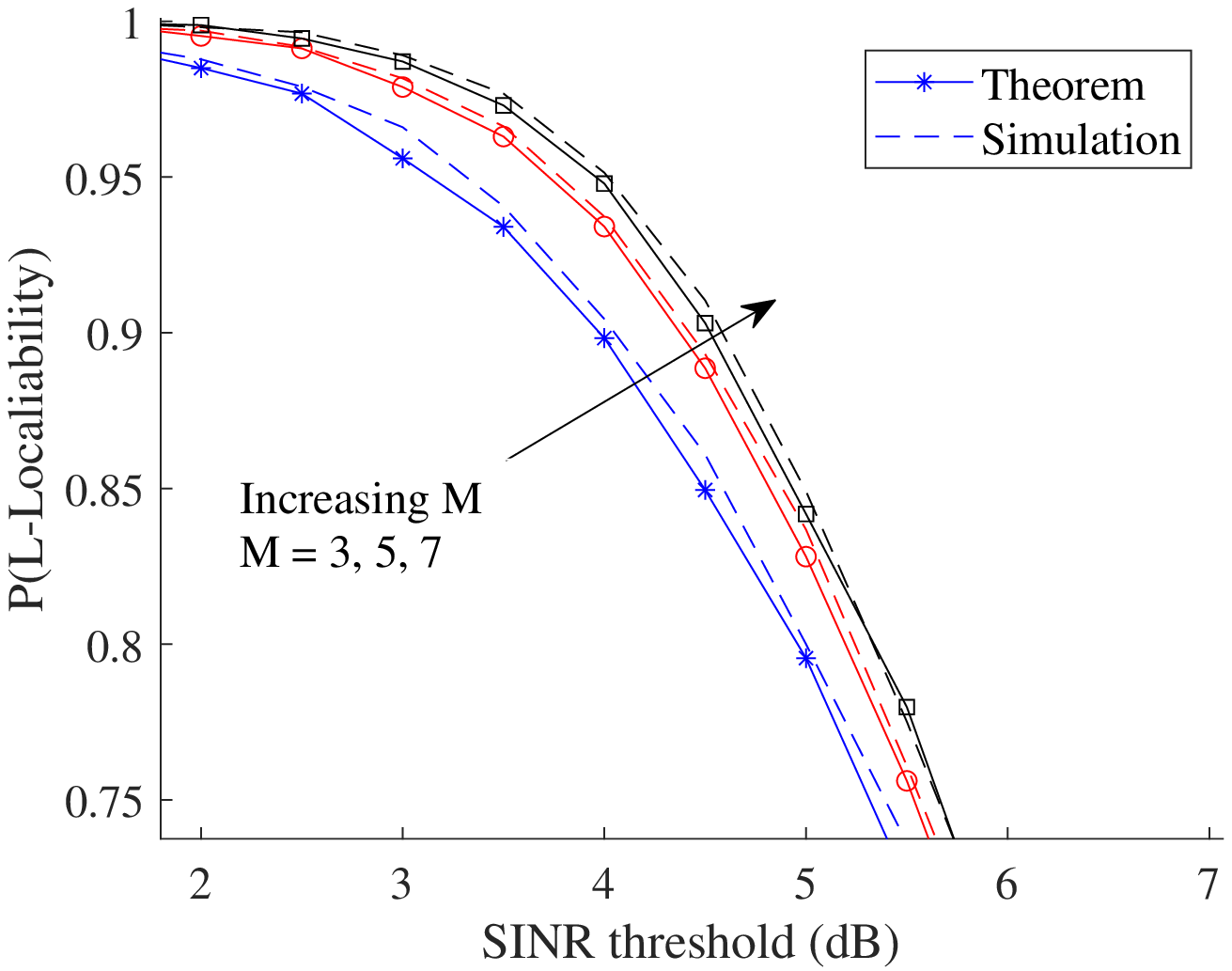}
    \caption{Impact of Nakagami fading parameter on $L$-localizability when $\alpha = 2.1$, $N_t = 64$ and $q = 0.75$}
    \label{fig8}
  \end{minipage}
  \hfill 
  \begin{minipage}[b]{0.48\textwidth}
    \includegraphics[width=\columnwidth]{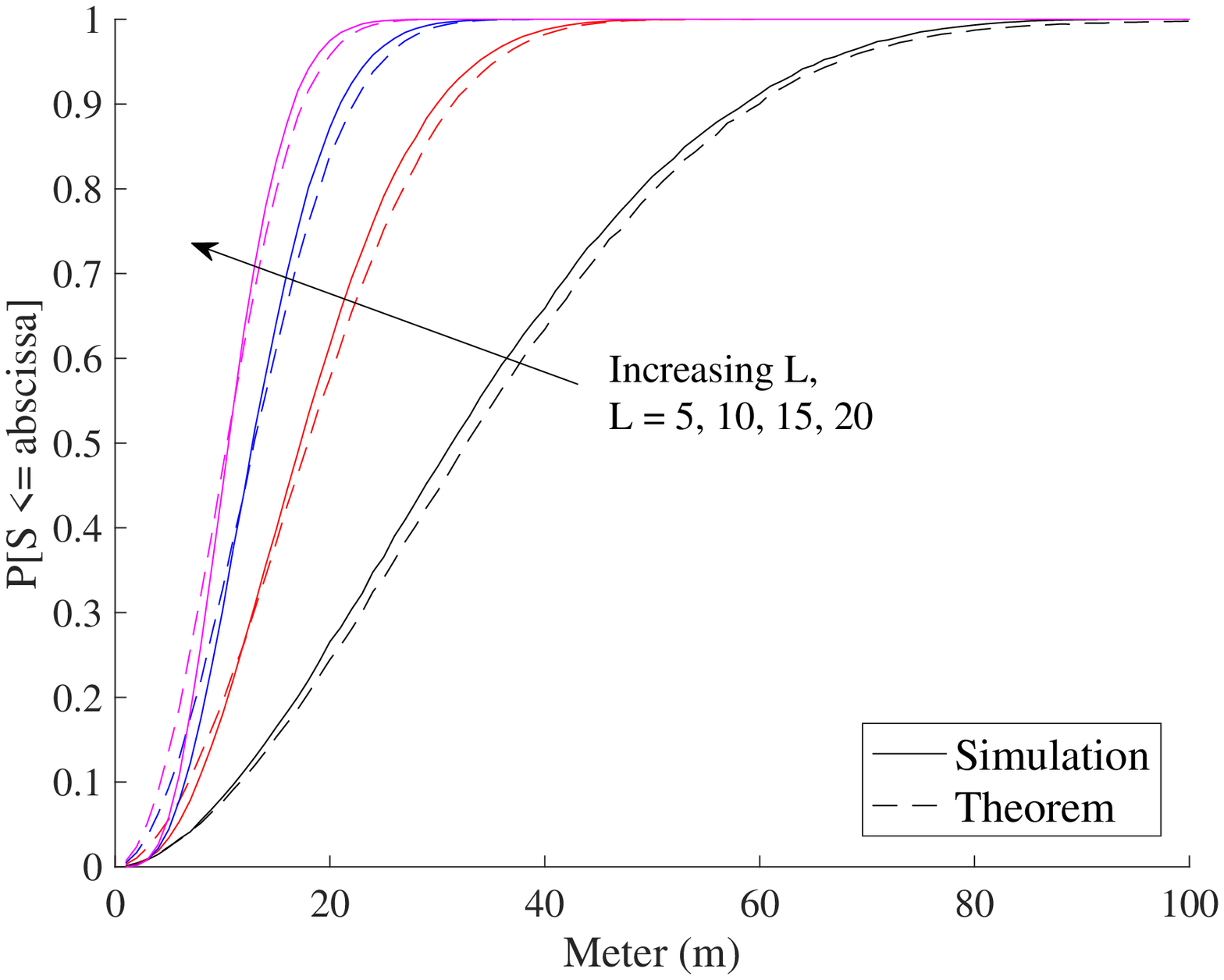}
    \caption{Impact of number of BSs on the distribution of $S$ when $\alpha = 2$, $N_t = 64$, $M = 5$ and $q = 0.75$}
    \label{fig9}
  \end{minipage}\\  
  \begin{minipage}[b]{0.48\textwidth}
    \includegraphics[width=\columnwidth]{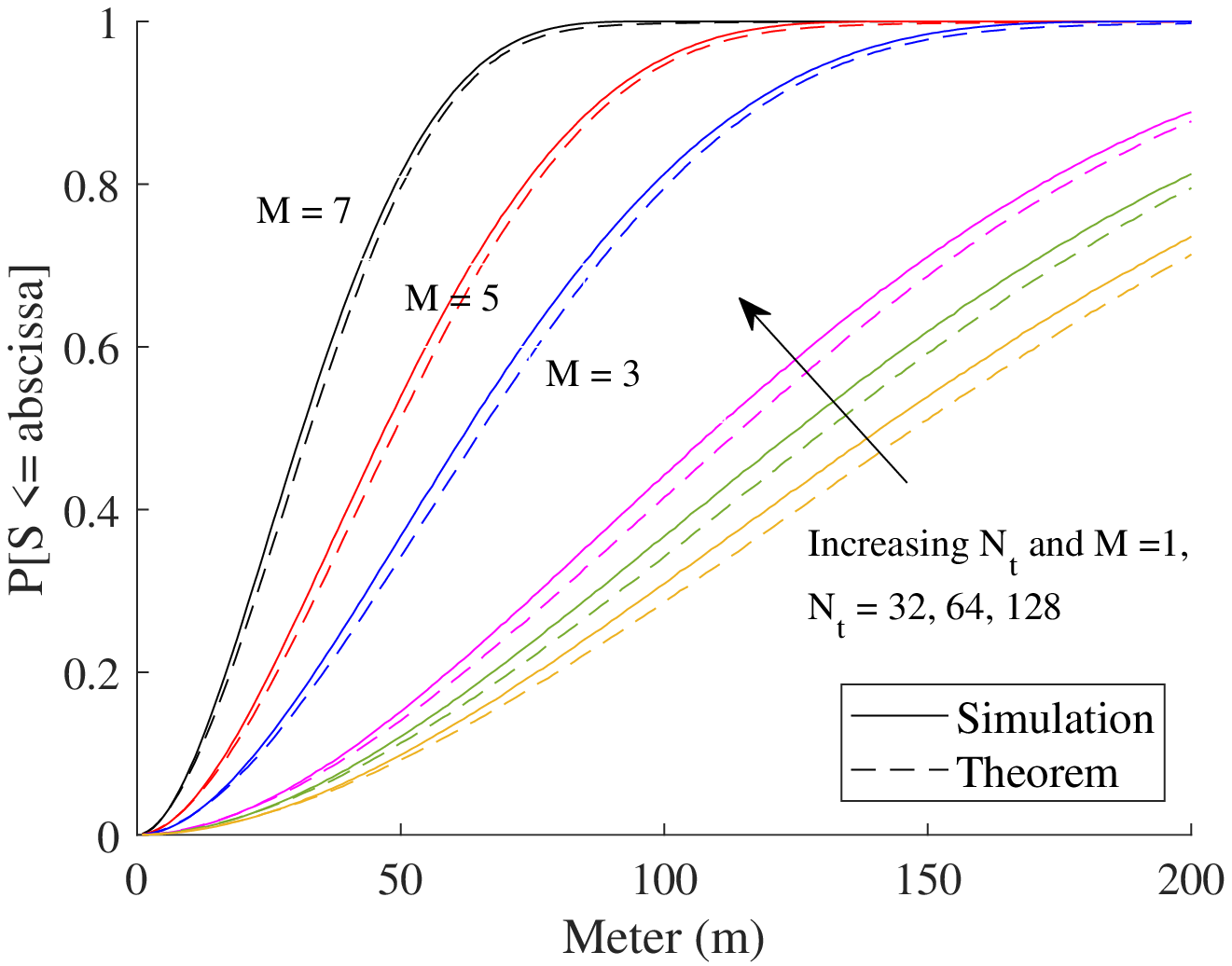}
    \caption{Impact of $M$ ($M = 3, 5, 7$) on the distribution of $S$ when $\alpha = 2$, $N_t = 64$, $L = 5$, $q = 0.75$ and the impact of $N_t$ ($N_t = 32, 64, 128$) when $\alpha = 2$, $M = 1$, $L = 5$ and $q = 0.75$}
    \label{fig10}
  \end{minipage}
  \hfill 
  \begin{minipage}[b]{0.48\textwidth}
    \includegraphics[width=\columnwidth]{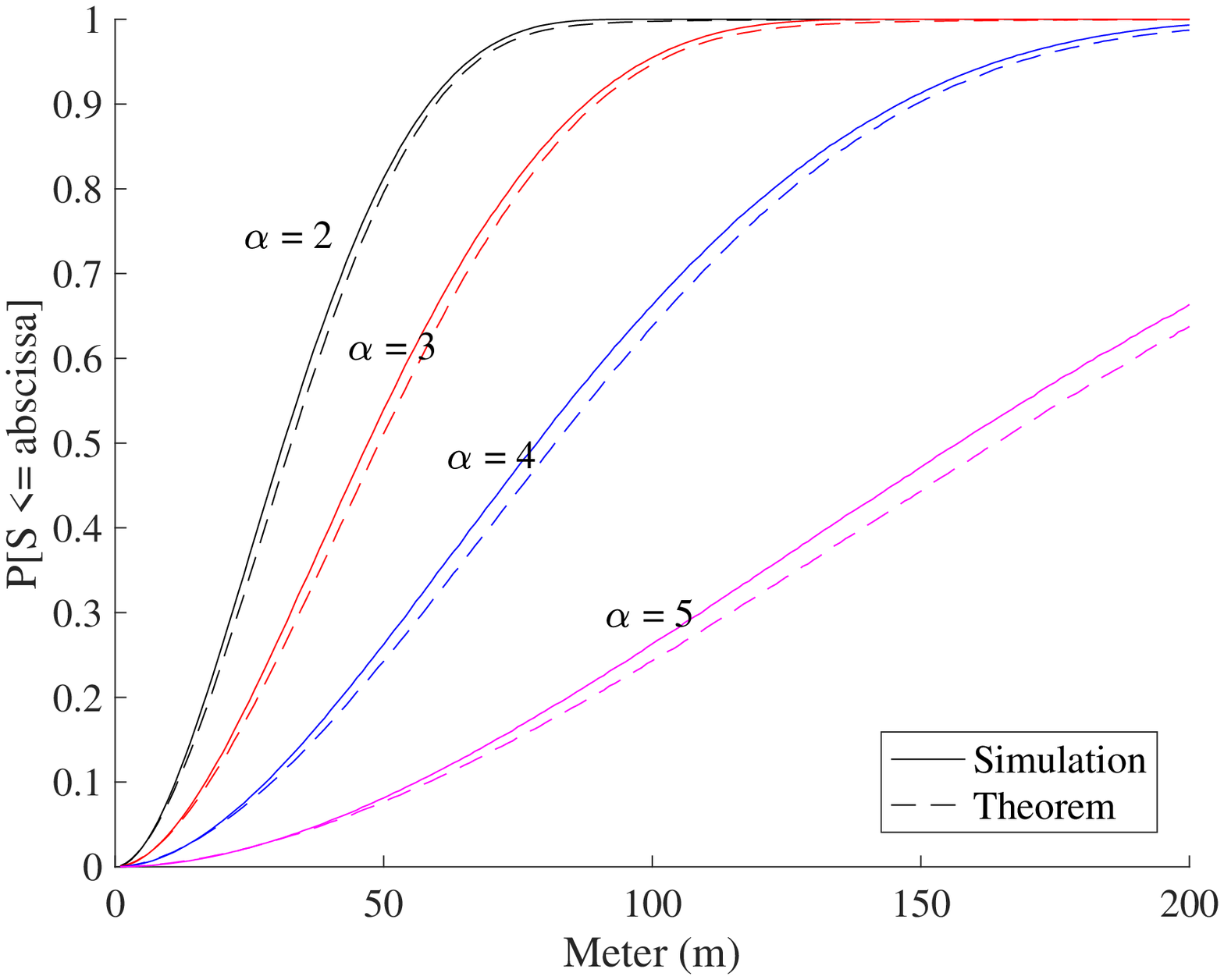}
    \caption{Impact of path-loss on the distribution of $S$ when $L = 5$, $N_t = 64$, $M = 5$ and $q = 0.75$}
    \label{fig11}
  \end{minipage}  
\end{figure*}

In this section, we evaluate the $L$-localizability and random AOA-based CRLB for mmWave networks, compare the simulation results to numerical results, and investigate the effect of network parameters on the localization performance. We used MATLAB to randomly simulate a realization of the node deployment $1 \times 10^{6}$ times. It is assumed that the BSs are randomly distributed by a homogeneous PPP with node density $\lambda = 2/\sqrt{3}\times500^{2}m^{2}$, bandwidth $W_{\rm{TOT}} = 1$ GHz, transmit power $P_T = 1$ Watt, antenna spacing  $d=\lambda_{w}/4$, path-loss intercept $\beta = \left(\lambda_{w}/4\pi\right)^{2}$, and main-lobe gain $G_{1} = 1$ and side-lobe gain $G_{2} = 0.2$ with its associate probability $p_a = 0.4$ and $p_b = 0.6$, respectively. 
Since the NLOS interference is ignored in our model, we choose a $\sigma_{\rm{AOA}}$ that accounts for an angular spread under NLOS conditions during simulation. 

\subsection{$L$-localizability Analysis}

In Figs. \ref{fig6}-\ref{fig8}, we investigate how the network parameters, including network loads, path-loss exponent and Nakagami fading parameter, affect the performance of $L$-localizability. Fig. \ref{fig6} compares the $L$-localizability $P_{L}$ versus the SINR threshold for different network loads $q$. The simulation results are plotted in dotted curves, where as the analytical results are represented by solid curves with a marker. All of the numerical results indicate that the analytical results accurately match the simulation results, justifying the analytical derivation. We observed that increasing the network load leads to a decrease in $P_{L}$. It means that network design should be optimized so that there is a sufficient number of BSs to meet the localization requirement. Fig. \ref{fig7} demonstrates the impact of path-loss on the $L$-localizability. As the path-loss exponent increases, the transmitted power across the mmWave link will significantly decline, causing a significant drop in $P_L$. In Fig. \ref{fig8}, we observe the impact of Nakagami fading parameter $M$ on $P_L$. Since the Nakagami channel becomes deterministic as the $M$ parameter increases, the $L$-localizability escalates with higher $M$ values. 

\subsection{Random AOA-based CRLB Analysis}
In Figs. \ref{fig9}-\ref{fig11}, we evaluate the distribution of $S$ for various network parameter configurations. Since the approximation of CRLB using $r_{\lceil L/4 \rceil}$ provides an accurate approximation to the original CRLB, we used the approximation based on the $\lceil L/4 \rceil$-th distance across Figs. \ref{fig9}-\ref{fig11}. In Fig. \ref{fig9}, we examine the impact of the number of participating BSs $L$ on the distribution of $S$. This is accomplished by varying the number of activated BSs transmitting during a localization procedure. It is observed that the value of $P[S \leq abscissa]$ increases for a larger $L$. 
Since the localization error reduces as the number of BSs increases, a network designer looking to improve the localization accuracy may aim to optimize the network environment to ensure a sufficient number of BSs participate in the localization procedure. 

Fig. \ref{fig10} compares the localization performance for various Nakagami fading parameter $M$ and the number of antenna elements $N_t$. It is observed that increasing $M$ parameter escalates $P[S \leq abscissa]$, which improves the localization performance. Furthermore, we demonstrate how the number of antenna elements $N_t$ affects the localization performance. As the number of antenna elements increases, the normalized noise power $\sigma_{n}^{2} = \frac{\sigma_{T}^{2}+\sigma_{{\rm{out}}}^{2}}{\beta P_{t}N_{t}}$ will be reduced, which leads to an increase of $P[S \leq abscissa]$. This indicates that the localization performance can be enhanced by adding more antenna elements in the BSs, which raises the implementation cost for each BS. Hence, the network designer should find an optimum trade-off between choosing a suitable number of antennas in the BS and enhancing the localization performance. Fig. \ref{fig11} shows the impact of path-loss on the performance of the mmWave-based localization systems. As the path-loss exponent increases, the value of $P[S \leq abscissa]$ decline, which is a similar pattern to Fig. \ref{fig7}. 

\section{Conclusion}
This paper presents $L$-localizability and random AOA-based CRLB for mmWave wireless network, where we used stochastic geometry to account for all possible positioning scenarios. We derived the $L$-localizability and random CRLB for AOA localization while considering the flat-top antenna radiation pattern and Nakagami fading. We provided numerical results to validate the analytical derivation and investigated the impact of various network parameters, \textit{e.g.}, network load, path-loss, fading parameters, number of BSs, number of antenna elements, on the localization performance. The analytical framework developed in this paper offers an accurate  tool to evaluate the localization performance of mmWave wireless networks, without relying on numerical simulation. The network operators can use the asymptotic bounds to optimize the network parameters and find the best deployment of the BSs to ensure the localization performance. In our future work, we will apply the approximation method to evaluate the performances of TOA, TDOA and RSS based localization and investigate the impact of various channel models.

\ifCLASSOPTIONcaptionsoff
  \newpage
\fi


\begin{thebibliography}{1}
\bibitem{ref1} F. Khelifi, A. Bradai, A. Benslimane, P. Rawat, M. Atri, "A Survey of Localization Systems in Internet of Things", \emph{Mobile Network and Application}, pp761–785, 2019.

\bibitem{ref2} W. Roh, J. Seol, J. Park, B. Lee, J. Lee, Y. Kim, J. Cho, K. Cheun, F. Aryanfar, "Millimeter-wave beamforming as an enabling technology for 5G cellular communications: Theoretical feasibility and prototype results," \emph{IEEE Communications Magazine}, vol. 52, no. 2, pp. 106-113, Feb. 2014.

\bibitem{ref3} A. Alkhateeb, S. Alex, P. Varkey, Y. Li, Q. Qu and D. Tujkovic, "Deep Learning Coordinated Beamforming for Highly-Mobile Millimeter Wave Systems," \emph{IEEE Access}, vol. 6, pp. 37328-37348, 2018.

\bibitem{ref4} H. C. So, "Source localization: algorithms and analysis", \emph{Handbook of Position Location: Theory, Practice and Advances}, Chapter 3, S.A.Zekavat and M.Buehrer, Eds., Wiley-IEEE Press, 2019.

\bibitem{ref5} M. Haenggi, "Stochastic Geometry for Wireless Networks," Cambridge, U.K.: Cambridge Univ. Press, 2013.

\bibitem{ref5-b} Y. J. Chun, S. L. Cotton, H. S. Dhillon, F. J. Lopez-Martinez, J. F. Paris and S. K. Yoo, "A Comprehensive Analysis of 5G Heterogeneous Cellular Systems Operating Over $\kappa$–$\mu$ Shadowed Fading Channels," in \emph{IEEE Transactions on Wireless Communications}, vol. 16, no. 11, pp. 6995-7010, Nov. 2017.

\bibitem{ref6} J. Schloemann, H. S. Dhillon and R. M. Buehrer, "Toward a Tractable Analysis of Localization Fundamentals in Cellular Networks," \emph{IEEE Transactions on Wireless Communications}, vol. 15, no. 3, pp. 1768-1782, March. 2016.

\bibitem{ref7} C. E. O’Lone, H. S. Dhillon and R. M. Buehrer, "A Statistical Characterization of Localization Performance in Wireless Networks," \emph{IEEE Transactions on Wireless Communications}, vol. 17, no. 9, pp. 5841-5856, Sept. 2018.

\bibitem{ref8} J. Khalife, C. Sevinç and Z. M. Kassas, "Performance Evaluation of TOA Positioning in Asynchronous Cellular Networks Using Stochastic Geometry Models," \emph{IEEE Wireless Communications Letters}, May. 2020.

\bibitem{ref9} H. Elsawy, W. Dai, M. Alouini and M. Z. Win, "Base Station Ordering for Emergency Call Localization in Ultra-Dense Cellular Networks," \emph{IEEE Access}, vol. 6, pp. 301-315, 2018.

\bibitem{ref10} K. Radnosrati, G. Hendeby, C. Fritsche, F. Gunnarsson and F. Gustafsson, "Performance of OTDOA positioning in narrowband IoT systems," in \emph{2017 IEEE 28th Annual International Symposium on Personal, Indoor, and Mobile Radio Communications (PIMRC)}, Montreal, Canada, pp. 1-7, 2017.

\bibitem{ref10-b} J. He and H. C. So, "A Hybrid TDOA-Fingerprinting-Based Localization System for LTE Network," in \emph{IEEE Sensors Journal}, vol. 20, no. 22, pp. 13653-13665, Nov. 2020.

\bibitem{ref11} A. Pin, R. Rinaldo, A. Tonello, C. Marshall, M. Driusso, A. Biason,
"LTE ranging measurement using uplink opportunistic signals and the SAGE algorithm," in \emph{2019 27th European Signal Processing Conference (EUSIPCO)}, A Coruna, Spain, pp. 1-5, 2019.

\bibitem{ref12} L. Ma, N. Jin, Y. Cui and Y. Xu, "LTE user equipment RSRP difference elimination method using multidimensional scaling for LTE fingerprint-based positioning system," in \emph{2017 IEEE International Conference on Communications (ICC)}, Paris, France, pp. 1-6, 2017.

\bibitem{ref13} G. Çelik, H. Çelebi and G. Tuna, "A novel RSRP-based E-CID positioning for LTE networks," in \emph{2017 13th International Wireless Communications and Mobile Computing Conference (IWCMC)}, Valencia, Spain pp. 1689-1692, 2017. 

\bibitem{ref14} F. Lemic, J. Martin, C. Yarp, D. Chan, V. Handziski, R. Brodersen, G. Fettweis, A. Wolisz, J. Wawrzynek, "Localization as a feature of mmWave communication," \emph{2016 International Wireless Communications and Mobile Computing Conference (IWCMC)}, Paphos, 2016, pp. 1033-1038.

\bibitem{ref15} K. Pahlavan, "Indoor geolocation science and technology," \emph{Communications Magazine, IEEE,} vol. 40, no. 2, pp. 112–118, 2002.

\bibitem{ref16} F. Wen, H. Wymeersch, B. Peng, W. P. Tay, H. C. So, and D. Yang, "A survey on 5G massive MIMO localization," \emph{Digit. Signal Process,} 2019.

\bibitem{ref17} H. Wang, C. Wang, T. Zheng and T. Q. S. Quek, "Impact of Artificial Noise on Cellular Networks: A Stochastic Geometry Approach," in \emph{IEEE Transactions on Wireless Communications}, vol. 15, no. 11, pp. 7390-7404, Nov. 2016.

\bibitem{ref18} B. Huang, T. Li, B. D. O. Anderson and C. Yu, "On the performance limit of sensor localization," \emph{2011 50th IEEE Conference on Decision and Control and European Control Conference}, Orlando, FL, 2011, pp. 7870-7875.

\bibitem{ref19} D. Moltchanov, "Distance distributions in random networks," \emph{Ad Hoc Networks}, vol. 10, no. 6, pp. 1146-1166, Aug. 2012.

\bibitem{ref20} S. Srinivasa and M. Haenggi, “Distance distributions in finite uniformly random networks: Theory and applications,” \emph{IEEE Trans. on Vehicular Technology,} vol. 59, no. 2, pp. 940–949, 201

\bibitem{ref21} M. R. Akdeniz, Y. Liu, M. K. Samimi, S. Sun, S. Rangan, T. S. Rappaport, E. Erkip, "Millimeter Wave Channel Modeling and Cellular Capacity Evaluation," \emph{IEEE Journal on Selected Areas in Communications}, vol. 32, no. 6, pp. 1164-1179, June. 2014.

\bibitem{ref22} G. Lee, Y. Sung and J. Seo, "Randomly-Directional Beamforming in Millimeter-Wave Multiuser MISO Downlink," \emph{IEEE Transactions on Wireless Communications}, vol. 15, no. 2, pp. 1086-1100, Feb. 2016.

\bibitem{ref23} G. Lee, Y. Sung and M. Kountouris, "On the Performance of Random Beamforming in Sparse Millimeter Wave Channels," \emph{IEEE Journal of Selected Topics in Signal Processing}, vol. 10, no. 3, pp. 560-575, April. 2016.

\bibitem{ref24} Y. Wu and C. Han, "Interference and Coverage Analysis for Indoor Terahertz Wireless Local Area Networks," \emph{2019 IEEE Globecom Workshops (GC Wkshps), Waikoloa,} HI, USA, 2019, pp. 1-6.

\bibitem{ref25} J. G. Andrews, F. Baccelli, and R. K. Ganti, "A tractable approach to coverage and rate in cellular networks," \emph{IEEE Transactions on Wireless Communications}, vol. 59, no. 11, pp. 3122–3134, Nov. 2011.

\bibitem{ref26} D. Zwillinger, Table of Integrals, Series, and Products. Amsterdam, The Netherlands: Elsevier, 2014.

\bibitem{ref27} M. Giordani, M. Mezzavilla, A. Dhananjay, S. Rangan and M. Zorzi, "Channel Dynamics and SNR Tracking in Millimeter Wave Cellular Systems,"  \emph{22th European Wireless Conference,} Oulu, Finland, 2016, pp. 1-8.

\bibitem{ref28} R. Koirala, B. Denis, D. Dardari and B. Uguen, "Localization bound based beamforming optimization for multicarrier mmWave MIMO," \emph{2017 14th Workshop on Positioning, Navigation and Communications (WPNC),} Bremen, 2017, pp. 1-6.

\bibitem{ref29} D. B. Jourdan, D. Dardari, and M. Z. Win, "Position error bound for UWB localization in dense cluttered environments," in Proc. \emph{IEEE Int. Conf. Commun.,} Istanbul, Turkey, Jun. 2006, pp. 3705–3710.

\bibitem{ref30} T. M. Cover and J. A. Thomas, Elements of Information Theory.
New York, NY, USA: Wiley, 1991, chapter. 2–9.

\bibitem{ref31} J. Schloemann, H. S. Dhillon and R. M. Buehrer, "A Tractable Metric for Evaluating Base Station Geometries in Cellular Network Localization," in IEEE Wireless Communications Letters, vol. 5, no. 2, pp. 140-143, April 2016.

\bibitem{ref32} H. A. David and H. N. Nagaraja, Order Statistics. Hoboken, NJ: John Wiley and Sons, Inc., 3rd ed. 2003.

\bibitem{ref33} H. Elshaer, M. N. Kulkarni, F. Boccardi, J. G. Andrews and M. Dohler, "Downlink and Uplink Cell Association With Traditional Macrocells and Millimeter Wave Small Cells," in \emph{IEEE Transactions on Wireless Communications,} vol. 15, no. 9, pp. 6244-6258, Sept. 2016.

\bibitem{ref34} D. B. Jourdan and N. Roy, "Optimal sensor placement for agent localization," in \emph{Proc. IEEE/ION Position, Location, Navigat. Symp.,} Coronado, CA, USA, Apr. 2006, pp. 128–139.

\bibitem{ref35} C. Chang and A. Sahai, "Estimation bounds for localization," in \emph{Proc. 1st Annu. IEEE Commun. Soc. Conf. Sensor Ad Hoc Commun. Netw.,} Santa Clara, CA, USA, Oct. 2004, pp. 415–424.
\end{thebibliography}
\end{document}